\newenvironment{properties}[2][1]{
	\begin{enumerate}[label={\normalfont (\ref{#2}\alph*)},labelwidth=*,leftmargin=*,itemsep=0pt,topsep=3pt, start=#1]
	}{
	\end{enumerate}
}
\DeclareMathOperator{\E}{\mathbb E}
\DeclareMathOperator{\union}{\bigcup}
\newcommand{\R}{\mathbb{R}}
\newtheorem{theorem}{Theorem}[section]
\newtheorem{lemma}[theorem]{Lemma}
\newtheorem{corollary}[theorem]{Corollary}
\newtheorem{definition}[theorem]{Definition}
\newcommand{\set}[1]{\left\{#1\right\}}
\newcommand{\floor}[1]{\left\lfloor#1\right\rfloor}
\newcommand{\ceil}[1]{\left\lceil#1\right\rceil}
\newcommand{\calG}{{\mathcal G}}
\newcommand{\calJ}{{\mathcal J}}
\newcommand{\calP}{{\mathcal P}}
\newcommand{\calS}{{\mathcal S}}
\newcommand{\calV}{{\mathcal V}}
\newcommand{\bbJ}{{\mathbb{J}}}
\newcommand{\eps}{\epsilon}
\newcommand{\tcalS}{{\widetilde \calS}}
\newcommand{\dav}{d_{\mathsf {av}}}
\newcommand{\LP}{\mathsf{LP}}
\newcommand{\diam}{\mathsf{diam}}
\newcommand{\KM}{{\sf KM}\xspace}
\newcommand{\CKM}{{\sf CKM}\xspace}
\newcommand{\MST}{\mathsf{MST}}
\newcommand{\EMD}{\mathrm{EMD}}
\newcommand{\sfN}{{\mathsf{N}}}		
\newcommand{\sfC}{{\mathsf{C}}}
\newcommand{\rma}{{\mathrm{a}}}
\newcommand{\rmb}{{\mathrm{b}}}
\newcommand{\remove}{{\mathsf{remove}}}	
\newcommand{\sfd}{{\mathsf{d}}} 
\title{Constant Approximation for Capacitated $k$-Median with $(1+\eps)$-Capacity Violation}
\author{G\"{o}kalp Demirci\thanks{Department of Computer Science, University of Chicago, {\tt demirci@cs.uchicago.edu}.} \and
Shi Li\thanks{Department of Computer Science and Engineering, University at Buffalo, {\tt shil@buffalo.edu}. Supported in part by NSF grant CCF-1566356. }
} 
\date{}
\begin{document}
	\maketitle 

	\begin{abstract}
	We study the Capacitated $k$-Median problem for which existing constant-factor approximation algorithms are all pseudo-approximations that violate either the capacities or the upper bound $k$ on the number of open facilities. Using the natural LP relaxation for the problem, one can only hope to get the violation factor down to 2. Li [SODA'16] introduced a novel LP to go beyond the limit of 2 and gave a constant-factor approximation algorithm that opens $(1+\epsilon)k$ facilities. 
	
	We use the configuration LP of Li [SODA'16] to give a constant-factor approximation for the Capacitated $k$-Median problem in a seemingly harder configuration: we violate only the capacities by $1+\epsilon$. This result settles the problem as far as pseudo-approximation algorithms are concerned.
\end{abstract}

\section{Introduction}
\label{sec:intro}

In the capacitated $k$-median problem (\CKM), we are given a set $F$ of facilities together with their capacities $u_i \in \mathbb{Z}_{>0}$ for $i\in F$, a set $C$ of clients, a metric $d$ on $F \cup C$, and a number $k$. We are asked to open some of these facilities $F'\subseteq F$ and give an assignment $\sigma : C \rightarrow F'$ connecting each client to one of the open facilities so that the number of open facilities is not bigger than $k$, i.e. $\left| F' \right| \leq k$ (\textit{cardinality constraint}), and each facility $i\in F'$ is connected to at most $u_i$ clients, i.e. $\left| \sigma^{-1}(i)\right| \leq u_i$ (\textit{capacity constraint}). The goal is to minimize the sum of the connection costs, i.e. $\sum_{j\in C}d(\sigma(j), j)$.

Without the capacity constraint, i.e. $u_i = \infty $ for all $i\in F$, this is the famous $k$-median problem (\KM). The first constant-factor approximation algorithm for \KM is given by Charikar et al.\ \cite{CGT99}, guaranteeing a solution within $6\frac{2}{3}$ times the cost of the optimal solution. Then the approximation ratio has been improved by a series of papers \cite{JV01, CG99, AGK01, JMS02, LS13, BPR15}. The current best ratio for \KM is $2.675 +\epsilon$ due to Byrka et al.\ \cite{BPR15}, which was obtained by improving a part of the algorithm given by Li and Svensson \cite{LS13}. 

On the other hand, we don't have a true constant approximation for \CKM. All known constant-factor results are pseudo-approximations which violate either the cardinality or the capacity constraint. Aardal et al.\ \cite{ABG15} gave an algorithm which finds a $(7+\epsilon)$-approximate solution to \CKM by opening at most $2k$ facilities, i.e. violating the cardinality constraint by a factor of 2. Guha \cite{Guha00} gave an algorithm with approximation ratio $16$ for the more relaxed \textit{uniform} \CKM, where all capacities are the same, 
by connecting at most $4u$ clients to each facility, thus violating the capacity constraint by 4. Li \cite{Li14} gave a constant-factor algorithm for uniform \CKM with capacity violation of only $2+\epsilon$ by improving the algorithm in \cite{CGT99}.   For non-uniform capacities, Chuzhoy and Rabani \cite{CR05} gave a 40-approximation for \CKM by violating the capacities by a factor of 50 using a mixture of primal-dual schema and lagrangian relaxations. Their algorithm is for a slightly relaxed version of the problem called \textit{soft} \CKM where one is allowed to open multiple collocated copies of a facility in $F$.  The \CKM definition we gave above is sometimes referred to as \textit{hard} \CKM as opposed to this version. Recently, Byrka et al.\ \cite{BFR15} gave a constant-factor algorithm for hard \CKM by keeping capacity violation factor under $ 3+\epsilon$.

All these algorithms for \CKM use the basic LP relaxation for the problem which is known to have an unbounded integrality gap even when we are allowed to violate either the capacity or the cardinality constraint by $2-\epsilon$. In this sense, results of \cite{ABG15} and \cite{Li14} can be considered as reaching the limits of the basic LP relaxation in terms of restricting the violation factor. In order to go beyond these limits, Li \cite{Li15} introduced a novel LP called the \textit{rectangle} LP and presented a constant-factor approximation algorithm for soft uniform \CKM by opening $(1+\epsilon ) k$ facilities. This was later generalized by the same author to non-uniform \CKM \cite{Li16},  where he introduced an even stronger LP relaxation called the \textit{configuration} LP. Very recently, independently of the work in this paper, Byrka et al.\ \cite{BRU15} used this configuration LP  to give a similar algorithm for uniform \CKM violating the capacities by $1+\epsilon$.

\subsection{Our Result}
In this paper, we use the configuration LP of \cite{Li16} to give an $O(1/\epsilon^5)$-approximation algorithm for non-uniform hard \CKM which respects the cardinality constraint and connects at most \mbox{$(1+\epsilon )u_i$} clients to any open facility $i\in F$.  The running time of our algorithm is $n^{O(1/\epsilon)}$. Thus, with this result, we now have settled the \CKM problem from the view of pseudo-approximation algorithms: either $(1+\epsilon)$-cardinality violation or $(1+\epsilon)$-capacity violation is sufficient for a constant approximation for \CKM.


The known results for the \CKM problem have suggested that designing algorithms with capacity violation (satisfying the cardinality constraint) is harder than designing algorithms with cardinality violation. Note, for example, that the best known cardinality violation factor for non-uniform \CKM among algorithms using only the basic LP relaxation (a factor of $2$ in \cite{ABG15}) matches the smallest possible cardinality violation factor dictated by the gap instance. In contrast, the best capacity-violation factor is $ 3+\epsilon$ due to \cite{BFR15}, but the gap instance for the basic LP  with the largest known gap eliminates only the algorithms with capacity violation smaller than $2$.  
\ifdefined\proceeding 
\else

\fi
Furthermore, we can argue that, for algorithms based on the basic LP and the configuration LP, a $\beta$-capacity violation can be converted to a $\beta$-cardinality violation, suggesting that allowing capacity violation is more restrictive than allowing cardinality violation. 
\ifdefined\proceeding 
We leave the detail to the full version of the paper.
\else
Suppose we have an $\alpha$-approximation algorithm for \CKM that violates the capacity constraint by a factor of $\beta$, based on the basic LP relaxation. Given a solution $(x, y)$ to the basic LP for a given \CKM instance $I$, we construct a new instance $I'$ by scaling $k$ by a factor of $\beta$ and scaling all capacities by a factor of $1/\beta$ (in a valid solution, we allow connections to be fractional, thus fractional capacities do not cause issues).  Then it is easy to see that $(x, \beta y)$ is a valid LP solution to $I'$ (with soft capacities). A solution to $I'$ that only violates the capacity constraint by a factor of $\beta$ is a solution to $I$ that only violates the cardinality constraint by a factor of $\beta$.  Thus, by considering the new instance, we conclude that for algorithms based on the basic LP relaxation, violating the cardinality constraint gives more power. The same argument can be made for algorithms based on the configuration LP: one can show that a valid solution to the configuration LP for $I$ yields a valid solution to the configuration LP for $I'$.  However, this reduction in the other direction does not work: due to constraint~\eqref{LPC:connect-to-open}, scaling $y$ variables by a factor of $1/\beta$ does not yield a valid LP solution.  
\fi
\smallskip

\noindent {\bf Our Techniques.}\ 
Our algorithm uses the configuration LP introduced in \cite{Li16} and the framework of \cite{Li16} that creates a two-level clustering of facilities. \cite{Li16} considered the $(1+\epsilon)$-cardinality violation setting,  which is more flexible in the sense that one has the much freedom to distribute the $\epsilon k$ extra facilities. In our $(1+\epsilon)$-capacity violation setting, each facility $i$ can provide an extra $\epsilon u_i$ capacity; however, these extra capacities are restricted by the locations of the facilities. 
In particular, we need one more level of clustering to form so-called ``groups'' so that each group contains $\Omega(1/\epsilon)$ fractional open facility. Only with groups of $\Omega(1/\epsilon)$ facilities, we can benefit from the extra capacities given by the $(1+\epsilon)$-capacity scaling. Our algorithm then constructs distributions of local solutions. Using a dependent rounding procedure we can select a local solution from each distribution such that the solution formed by the concatenation of local solutions has a small cost.   This initial solution may contain more than $k$ facilities. We then remove some already-open facilities, and bound the cost incurred due to the removal of open facilities.  When we remove a facility, we are guaranteed that there is a close group containing $\Omega(1/\epsilon)$ open facilities and the extra capacities provided by these facilities can compensate for the capacity of the removed facility. 

\smallskip

\noindent {\bf Organization.}\  The remaining part of the paper is organized as follows. In Sections~\ref{sec:config-LP} and \ref{section:setup}, we describe the configuration LP introduced in \cite{Li16} and our three-level clustering procedure respectively. 
 In Section~\ref{sec:distribution}, we show how to construct the distributions of local solutions.  Then finally in Section~\ref{sec:rounding}, we show how to obtain our final solution by combining the distributions we constructed.
 \ifdefined\proceeding
	Due to the page limit, some proofs are omitted and they can be found in the full version of the paper. 
 \fi
\section{The Basic LP and the Configuration LP}
\label{sec:config-LP}

In this section, we give the configuration LP of \cite{Li16} for \CKM. We start with the following basic LP relaxation:
\bgroup 	\setlength{\abovedisplayskip}{5pt}\setlength{\belowdisplayskip}{0pt}
	\begin{equation}
	\textstyle \min \qquad \sum_{i \in F, j\in C}d(i,j)x_{i,j} \qquad \text{s.t.} \tag{Basic LP} \label{LP:basic}
	\end{equation}
	\setlength{\abovedisplayskip}{0pt}		
	\noindent\begin{minipage}{0.46\textwidth}
		\begin{alignat}{2}
		\textstyle \sum_{i \in F} y_i &\leq  k;  \label{LPC:k-facilities} \\
		\textstyle  \sum_{i \in F}x_{i,j} &=1, &\qquad &\forall j \in C; \label{LPC:client-must-connect} \\
		\textstyle  x_{i,j} &\leq y_i, &\qquad &\forall i \in F, j \in C; \label{LPC:connect-to-open}
		\end{alignat}
	\end{minipage}
	\begin{minipage}{0.49\textwidth}
		\begin{alignat}{2}
		\textstyle  \sum_{j \in C}x_{i,j} &\leq u_iy_i, &\qquad &\forall i \in F; \label{LPC:capacity} \\
		\textstyle  0 \leq x_{i,j}, y_i &\leq 1, &\qquad &\forall i \in F, j \in C. \label{LPC:xy-non-neg} \\
		 \nonumber
		\end{alignat}
	\end{minipage}
\egroup
\vspace*{5pt}

In the LP, $y_i$ indicates whether a facility $i\in F$ is open, and $x_{i,j}$ indicates whether client $j\in C $ is connected to facility $i\in F$. Constraint (\ref{LPC:k-facilities}) is the cardinality constraint assuring that the number of open facilities is no more than $k$. Constraint (\ref{LPC:client-must-connect}) says that every client must be fully connected to facilities. Constraint (\ref{LPC:connect-to-open}) requires a facility to be open in order to connect clients. Constraint (\ref{LPC:capacity}) is the capacity constraint.   

It is well known that the basic LP has unbounded integrality gap, even if we are allowed to violate the cardinality constraint or the capacity constraint by a factor of $2-\epsilon$.  
\ifdefined\proceeding 
The description of the instance can be found in the full version of the paper. 
\else
In the gap instance for the capacity-violation setting, each facility has capacity $u$, $k$ is $2u-1$, and the metric consists of $u$ isolated groups each of which has $2$ facilities and $2u-1$ clients that are all collocated. In other words, the distances within a group are all 0 but the distances between groups are nonzero. Any integral solution for this instance has to have a group with at most one open facility. Therefore, even with $(2-2/u)$-capacity-violation, we have to connect 1 client in this group to open facilities in other groups. On the other hand, a fractional solution to the basic LP relaxation opens $2-1/u$ facilities in each group and serves the demand of each group using only the facilities in that group. Note that the gap instance disappears if we allow a capacity violation of 2.\footnote{A similar instance can be given to show that the gap is still unbounded when the cardinality constraint is violated, instead of the capacity constraint, by less than 2: let $k=u+1$ and each group have $2$ facilities and $u+1$ clients.} 

\fi
In order to overcome the gap in the cardinality-violation setting, Li \cite{Li16} introduced a novel LP for \CKM called the configuration LP, which we formally state below. Let us fix a set $B\subseteq F$ of facilities. Let $\ell = \Theta(1/\eps)$ and $\ell_1 = \Theta(\ell)$ be sufficiently large integers. Let $\calS = \set{S \subseteq B: |S| \leq \ell_1}$ and $\tcalS = \calS \cup \set{\bot}$, where $\bot$ stands for ``any subset of $B$ with size more than $\ell_1$''; for convenience, we also treat $\bot$ as a set such that $i \in \bot$ holds for every $i \in B$.   For $S \in \calS$, let $z^B_S$ indicate the event that the set of open facilities in $B$ is exactly $S$ and $z^B_\bot$ indicate the event that the number of open facilities in $B$ is more than $\ell_1$. 

For every $S \in \tcalS$ and $i \in S$, $z^B_{S, i}$ indicates the event that $z^B_S = 1$ and $i$ is open. (If $i \in B$ but $i \notin S$, then the event will not happen.)  Notice that when $i \in S \neq \bot$, we always have $z^B_{S, i} = z^B_S$; we keep both variables for notational purposes.  For every $S \in \tcalS, i \in S$ and client $j \in C$,  $z^B_{S, i, j}$ indicates the event that $z^B_{S, i} = 1$ and $j$ is connected to $i$.  In an integral solution, all the above variables are $\set{0,1}$ variables.  The following constraints are valid. To help understand the constraints, it is good to think of $z^B_{S, i}$ as $z^B_S \cdot y_i$ and $z^B_{S, i, j}$ as $z^B_S \cdot x_{i, j}$.


\noindent\hspace*{-0.05\textwidth}\begin{minipage}{0.56\textwidth}
\noindent\begin{alignat}{2}
\sum_{S \in \tcalS }z^B_S &= 1; \label{CLPC:add-to-one} \\
\sum_{S \in \tcalS:i \in S}z^B_{S, i} &= y_i, & \forall& i \in B; \label{CLPC:add-to-y} \\
\sum_{S \in \tcalS : i \in S}z^B_{S, i, j} &= x_{i,j}, & \forall& i \in B, j \in C; \label{CLPC:add-to-x} \\
0 \leq z^B_{S, i, j} \leq z^B_{S, i} &\leq z^B_S, & \forall & S \in \tcalS, i \in S, j \in C; \label{CLPC:non-negative}
\end{alignat}
\end{minipage}
\begin{minipage}{0.46\textwidth}
\vspace*{20pt}
\begin{alignat}{2}
z^B_{S, i}  &= z^B_S, &\forall& S \in \calS, i \in S; \label{CLPC:i-irrelevant} \\
\sum_{i \in S}z^B_{S, i, j} &\leq z^B_S, &\forall& S \in \tcalS, j \in C; \label{CLPC:j-connection-bound}\\
\sum_{j \in C}z^B_{S, i, j} &\leq u_i z^B_{S, i}, & \forall & S \in \tcalS, i \in S; \label{CLPC:capacity} \\
\sum_{i \in B}z^B_{\bot, i} &\geq \ell_1 z^B_\bot. \label{CLPC:more-than-ell-facilities} \\
\nonumber
\end{alignat}
\end{minipage}

Constraint~\eqref{CLPC:add-to-one} says that $z^B_S = 1$ for exactly one $S \in \tcalS$. Constraint~\eqref{CLPC:add-to-y} says that if $i$ is open then there is exactly one $S \in \tcalS$ with $z^B_{S, i} = 1$.  Constraint~\eqref{CLPC:add-to-x} says that if $j$ is connected to $i$ then there is exactly one $S \in \tcalS$ such that $z^B_{S, i, j} = 1$. Constraint~\eqref{CLPC:non-negative} is by the definition of variables. Constraint~\eqref{CLPC:i-irrelevant} holds as we mentioned earlier.  Constraint~\eqref{CLPC:j-connection-bound} says that if $z^B_S = 1$ then $j$ can be connected to at most 1 facility in $S$. Constraint~\eqref{CLPC:capacity} is the capacity constraint. Constraint~\eqref{CLPC:more-than-ell-facilities} says that if $z^B_{\bot} = 1$, there are at least $\ell_1$ open facilities in $B$.

The configuration LP is obtained from the basic LP by adding the $z$ variables and Constraints~\eqref{CLPC:add-to-one} to~\eqref{CLPC:more-than-ell-facilities} for every $B \subseteq F$. Since there are exponentially many subsets $B \subseteq F$, we don't know how to solve this LP efficiently. However, note that there are only polynomially many ($n^{O(\ell_1)}$) $z^B$ variables for a fixed $B \subseteq F$. Given a fractional solution $(x, y)$ to the basic LP relaxation, we can construct the values of $z^B$ variables and check their feasibility for Constraints~\eqref{CLPC:add-to-one} to~\eqref{CLPC:more-than-ell-facilities} in polynomial time as in \cite{Li16}. Our rounding algorithm either constructs an integral solution with the desired properties, or outputs a set $B \subseteq F$ such that Constraints~\eqref{CLPC:add-to-one} to~\eqref{CLPC:more-than-ell-facilities} are infeasible. In the latter case, we can find a constraint in the configuration LP that $(x, y)$ does not satisfy.  Then we can run the ellipsoid method and the rounding algorithm in an iterative way (see, e.g., \cite{CFL00, ASS14}).
\smallskip

\noindent {\bf Notations}\ \ From now on, we fix a solution $\left( \set{x_{i,j}:i\in F, j\in C}, \set{y_i:i\in F}\right)$ to the basic LP. We define $\dav(j) := \sum_{i \in F}x_{i, j}d(i, j)$ to be the connection cost of $j$, for every $j \in C$. Let $D_i := \sum_{j \in C}x_{i,j}\left( d(i,j) + \dav(j) \right)$ for every $i \in F$, and $D_S := \sum_{i \in S}D_i$ for every $S \subseteq F$. We denote the value of the solution $(x, y)$ by $\LP := \sum_{i \in F, j \in C}x_{i,j}d(i,j)=\sum_{j \in C}\dav(j)$. Note that $D_F = \sum_{i \in F,j \in C}x_{i,j}\left( d(i,j) + \dav(j) \right)= \sum_{i \in F, j\in C}x_{i, j}d(i, j)  + \sum_{j \in C}\dav(j)\sum_{i \in F}x_{i, j} =2\LP $. For any set $F' \subseteq F$ of facilities and $C' \subseteq C$ of clients, we shall let $x_{F', C'} := \sum_{i \in F', j \in C'} x_{i, j}$; we simply use $x_{i, C'}$ for $x_{\set{i}, C'}$ and  $x_{F', j}$ for $x_{F', \set{j}}$. For any $F' \subseteq F$, let $y_{F'} := \sum_{i \in F'}y_i$. Let $d(A, B):= \min_{i \in A, j \in B} d(i, j)$ denote the minimum distance between $A$ and $B$, for any $A, B \subseteq F \cup C$; we simply use $d(i, B)$ for $d(\set{i}, B)$.  \smallskip

\noindent {\bf Moving of Demands}\ \ After the set of open facilities is decided, the optimum connection assignment from clients to facilities can be computed by solving the minimum cost $b$-matching problem. Due to the integrality of the matching polytope, we may allow the connections to be fractional. That is, if there is a good fractional assignment, then there is a good integral assignment. So we can use the following framework to design and analyze the rounding algorithm. Initially there is one unit of demand at each client $j \in C$. During the course of our algorithm, we move demands fractionally within $F \cup C$; moving $\alpha$ units of demand from $i$ to $j$ incurs a cost of $\alpha d(i, j)$. At the end, all the demands are moved to $F$ and each facility $i\in F$ has at most $(1+O(\frac{1}{\ell}))u_i$ units of demand. We open a facility if it has positive amount of demand. Our goal is to bound the total moving cost by $O(\ell^5)\LP$ and the number of open facilities by $k$.
	\section{Representatives, Black Components, and Groups}\label{section:setup}

Our algorithm starts with bundling facilities together with a three-phase process each of which creates bigger and bigger clusters. At the end, we have a nicely formed network of sufficiently big clusters of facilities. See Figure~\ref{fig:setup} for illustration of the three-phase clustering.  

\begin{figure}
	\centering
	\includegraphics[width=0.95\textwidth]{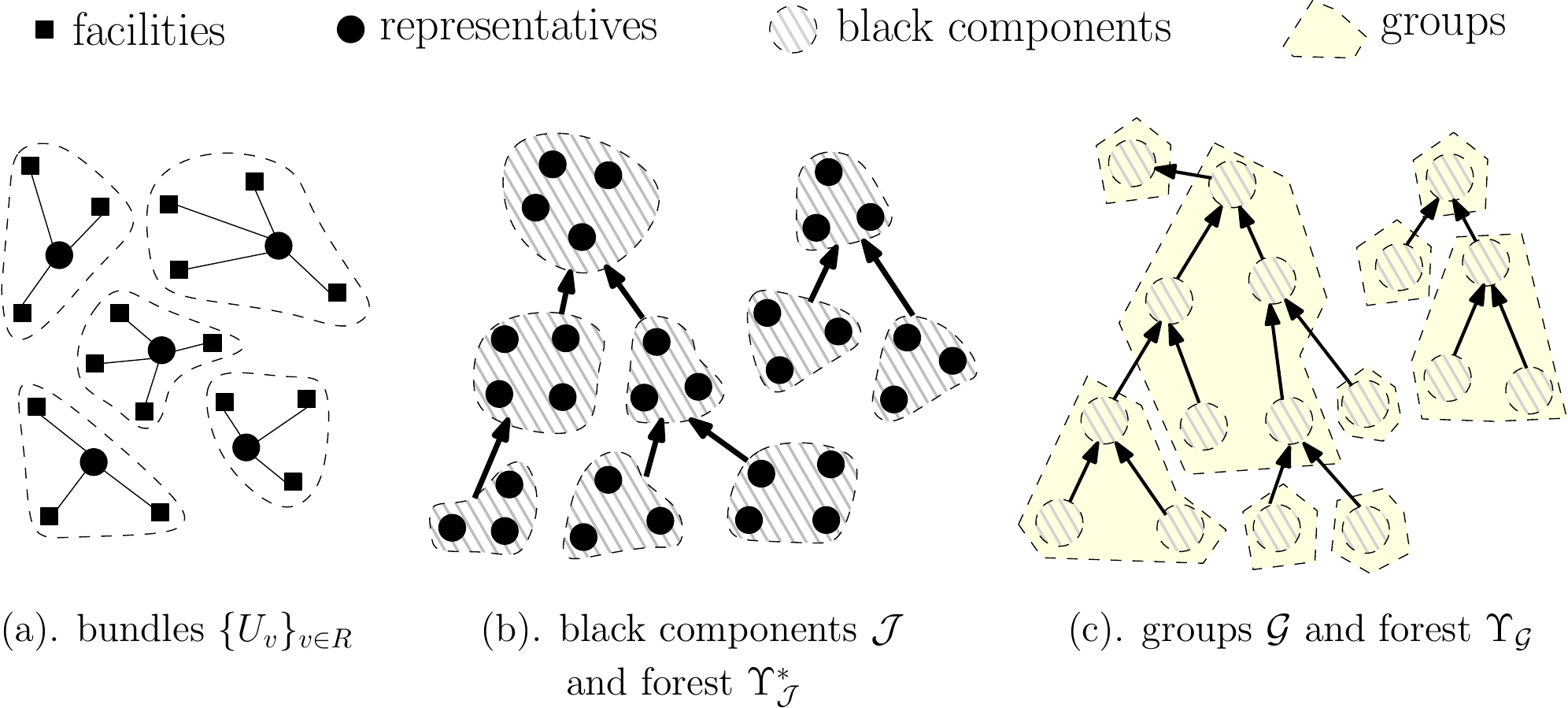}
	\caption{The three-phase clustering procedure. In the first phase (Figure (a)), we partition $F$ into bundles, centered at the set $R$ of representatives. In the second phase (Figure (b)), we partition $R$ into a family $\calJ$ of black components and construct a degree-2 rooted forest over $\calJ$. In the third phase (Figure(c)), we partition $\calJ$ into a family $\calG$ of groups; $\Upsilon_\calG$ is formed from $\Upsilon^*_\calJ$ by contracting each group into a single node.}
	\label{fig:setup}
\end{figure}


\subsection{Representatives, Bundles and Initial Moving of Demands}

In the first phase, we use a standard approach to facility location problems (\cite{LV92a, STA97, CGT99, Li16}) to partition the facilities into \emph{bundles} $\{U_v\}_{v \in R}$, where each bundle $U_v$ is associated with a center $v \in C$ that is called a \emph{representative} and $R \subseteq C$ is the set of representatives. Each bundle $U_v$ has a total opening at least $1/2$.  

Let $R = \emptyset$ initially. Repeat the following process until $C$ becomes empty: we select the client $v \in C$ with the smallest $\dav(v)$ and add it to $R$; then we remove all clients $j$ such that $d(j, v) \leq 4 \dav(j)$ from $C$ (thus, $v$ itself is removed). We use $v$ and its variants to index representatives, and $j$ and its variants to index general clients.  The family $\set{U_v:v \in R}$ is the Voronoi diagram of $F$ with $R$ being the centers: let $U_v = \emptyset$ for every $v \in R$ initially; for each location $i \in F$, we add $i$ to $U_v$ for $v \in R$ that is closest to $i$. For any subset $V \subseteq R$, we use $U(V) := \union_{v \in V} U_v$ to denote the union of Voronoi regions with centers $V$.

\begin{lemma} \label{claim:representatives}
The following statements hold:
	\begin{properties}{claim:representatives}
		\item \label{property:representatives-far-away} for all $v, v' \in R, v \neq v'$, we have $d(v, v') > 4 \max\set{\dav(v), \dav(v')}$ 
		\item \label{property:near-a-representative} for all $j \in C$, there exists $v \in R$, such that $\dav(v) \leq \dav(j)$ and $d(v, j) \leq 4 \dav(j)$; 
		\item \label{property:bundle-large} $y_{U_v} \geq 1/2$ for every $v \in R$;  
		\item \label{property:facility-to-representative} for any $v \in R$, $i \in U_v$, and $j \in C$, we have $d(i, v) \leq d(i, j) + 4 \dav(j)$.
	\end{properties}
\end{lemma}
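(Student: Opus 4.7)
The plan is to verify the four properties in order, since each leverages either the greedy rule used to build $R$ or a previously-established property. I would assume without loss of generality that when two clients have equal $\dav$ values, ties are broken so that a consistent ordering on removal is defined.

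For (a), the argument is immediate from the greedy rule. Given $v, v' \in R$ with $v$ selected before $v'$ (so $\dav(v) \le \dav(v')$), the client $v'$ was still present when $v$ was added, and was not removed; by the removal criterion this forces $d(v, v') > 4\dav(v')$, which equals $4\max\{\dav(v), \dav(v')\}$. For (b), if $j$ is itself a representative, take $v = j$; otherwise $j$ was removed when some $v \in R$ was added, which directly gives both $d(v, j) \le 4\dav(j)$ and (because $v$ was selected when $j$ was still eligible) $\dav(v) \le \dav(j)$.

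For (c), I would first apply a Markov-style bound to the fractional connections of $v$: since $\dav(v) = \sum_i x_{i,v} d(i,v)$, at most $1/2$ of the mass $\sum_i x_{i,v} = 1$ can lie at distance greater than $2\dav(v)$ from $v$. Using $x_{i,v} \le y_i$, this yields $\sum_{i : d(i,v) \le 2\dav(v)} y_i \ge 1/2$. The remaining step — and the only mildly delicate point in the whole proof — is to argue that every such facility $i$ belongs to $U_v$, i.e.\ that $v$ is the closest representative to $i$. For any other $v' \in R$, property (a) gives $d(v, v') > 4\dav(v) \ge 2d(i, v)$, so by the triangle inequality $d(i, v') \ge d(v, v') - d(i, v) > d(i, v)$. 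Hence all the mass counted above lies in $U_v$, and $y_{U_v} \ge 1/2$.

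Finally, for (d), I would invoke (b) to produce a representative $v' \in R$ with $d(v', j) \le 4\dav(j)$. Since $i \in U_v$ means $v$ is the representative closest to $i$, we have $d(i, v) \le d(i, v')$, and the triangle inequality gives $d(i, v') \le d(i, j) + d(j, v') \le d(i, j) + 4\dav(j)$, completing the bound. The main obstacle, as noted, is the Voronoi-separation step in (c); once (a) is in hand it is essentially automatic, and the rest of the lemma consists of routine bookkeeping from the greedy construction.
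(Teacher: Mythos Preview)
Your proposal is correct and follows essentially the same approach as the paper's proof: the greedy-removal argument for (a) and (b), the Markov bound plus Voronoi containment for (c), and the use of (b) together with the Voronoi closest-center property for (d) all match the paper. Your treatment of (c) is in fact slightly more explicit than the paper's, which simply asserts $\{i:d(i,v)\le 2\dav(v)\}\subseteq U_v$ ``by Property~(a)'' without writing out the triangle-inequality step.
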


\ifdefined\proceeding
\else
\begin{proof}
	First consider Property~\ref{property:representatives-far-away}.  Assume $\dav(v) \leq \dav(v')$. When we add $v$ to $R$, we remove all clients $j$ satisfying $d(v, j) \leq 4 \dav(j)$ from $C$. If $v' \in R$, then it must have been $d(v, v') > 4 \dav(v')$. For Property~\ref{property:near-a-representative}, just consider the iteration in which $j$ is removed from $C$.  The representative $v$ added to $R$ in this iteration satisfy the property. Then consider Property~\ref{property:bundle-large}. By Property~\ref{property:representatives-far-away}, we have $B:=\set{i \in F:d(i, v) \leq 2 \dav(v)} \subseteq U_v$. Since $\dav(v)=\sum_{i \in F}x_{i,v}d(i,v)$ and $\sum_{i \in F}x_{i,v} = 1$, we have $\dav(v) \geq (1-x_{B, v})2 \dav(v)$, implying $y_{U_v} \geq y_B \geq x_{B, v} \geq 1/2$, due to Constraint~\eqref{LPC:connect-to-open}.
	
	Then we consider Property~\ref{property:facility-to-representative}. By Property~\ref{property:near-a-representative}, there is a client $v' \in R$ such that $\dav(v') \leq \dav(j)$  and $d(v', j) \leq 4 \dav(j)$. Since $d(i, v) \leq d(i, v')$ as $v' \in R$ and $i$ was added to $U_v$, we have $d(i, v) \leq d(i, v') \leq d(i, j) + d(j, v') \leq d(i, j) + 4 \dav(j)$.
\end{proof}
\fi

The next lemma shows that moving demands from facilities to their corresponding representative doesn't cost much.
\begin{lemma} \label{lemma:moving-to-representatives}
For every $v\in R$,  we have 
$\sum_{i \in U_v}x_{i,C}d(i,v) \leq O(1) D_{U_v}$.
\end{lemma}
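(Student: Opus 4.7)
The plan is to apply the triangle-inequality style bound from Property~\ref{property:facility-to-representative}, which already gives $d(i,v) \le d(i,j) + 4\dav(j)$ for every $i \in U_v$ and every client $j$. This is the right tool because the left-hand side $\sum_{i \in U_v} x_{i,C} d(i,v) = \sum_{i \in U_v, j \in C} x_{i,j} d(i,v)$ is naturally indexed by client-facility pairs $(i,j)$, and we have a pointwise bound on $d(i,v)$ in terms of $d(i,j)$ and $\dav(j)$.

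Concretely, I would first expand $x_{i,C} = \sum_{j \in C} x_{i,j}$ to rewrite
\[
\sum_{i \in U_v} x_{i,C}\, d(i,v) = \sum_{i \in U_v}\sum_{j \in C} x_{i,j}\, d(i,v).
\]
Then I would apply Property~\ref{property:facility-to-representative} inside the sum to replace $d(i,v)$ with $d(i,j) + 4\dav(j)$, obtaining the upper bound
\[
\sum_{i \in U_v}\sum_{j \in C} x_{i,j}\bigl(d(i,j) + 4\dav(j)\bigr) \;\le\; 4\sum_{i \in U_v}\sum_{j \in C} x_{i,j}\bigl(d(i,j) + \dav(j)\bigr) \;=\; 4\, D_{U_v},
\]
where the last equality is just the definition of $D_i$ summed over $i \in U_v$. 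This yields the claim with an explicit constant of $4$, so the $O(1)$ factor in the statement is justified.

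There is essentially no obstacle here: the whole argument is a one-line consequence of Property~\ref{property:facility-to-representative}, which was already established in Lemma~\ref{claim:representatives} from the Voronoi assignment of facilities to their nearest representative. The only thing to be mindful of is the combination of the two terms $d(i,j)$ and $4\dav(j)$ into a single constant multiple of $d(i,j) + \dav(j)$, which is immediate since both coefficients are at most $4$.
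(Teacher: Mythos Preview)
Your proposal is correct and matches the paper's own proof essentially line for line: apply Property~\ref{property:facility-to-representative} to bound $d(i,v)$ by $d(i,j)+4\dav(j)$, then absorb the factor $4$ into the definition of $D_{U_v}$. There is nothing to add.
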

\ifdefined\proceeding
\else
\begin{proof}
	By Property~\ref{property:facility-to-representative}, we have $d(i, v) \leq d(i, j) + 4\dav(j)$ for every $i \in U_v$ and $j \in C$.  Thus, 
	$$\sum_{i \in U_v}x_{i,C} d(i,v) \leq \sum_{i \in U_v, j \in C}x_{i,j} \big(d(i, j) + 4\dav(j)\big) \leq \sum_{i \in U_v}4D_i = 4D_{U_v}. $$ 
\end{proof}
\fi

Since $\set{U_v:v\in R}$ forms a partition of $F$, we get the following corollary.
\begin{corollary} \label{cor:moving-to-representatives}
$\sum_{v\in R,i \in U_v}x_{i,C}d(i,v) \leq O(1) \LP$.
\end{corollary}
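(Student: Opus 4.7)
The plan is to obtain the corollary by simply summing the per-representative bound from Lemma~\ref{lemma:moving-to-representatives} over all $v \in R$ and then using the partition structure of the bundles together with the identity $D_F = 2\LP$ established in the Notations paragraph.

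Concretely, I would first write
\[
\sum_{v \in R}\sum_{i \in U_v} x_{i,C}\, d(i,v) \;\leq\; \sum_{v \in R} O(1)\, D_{U_v} \;=\; O(1) \sum_{v \in R} D_{U_v},
\]
invoking Lemma~\ref{lemma:moving-to-representatives} term by term. Next I would observe that $\set{U_v : v \in R}$ is a partition of $F$ (this is immediate from the construction of the Voronoi diagram, since every $i \in F$ is assigned to exactly one closest representative). Therefore
\[
\sum_{v \in R} D_{U_v} \;=\; \sum_{v \in R}\sum_{i \in U_v} D_i \;=\; \sum_{i \in F} D_i \;=\; D_F.
\]
Finally I would substitute the identity $D_F = 2\LP$ recorded in the Notations paragraph to conclude that the total moving cost is at most $O(1)\LP$, which is the desired bound.

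There is essentially no obstacle here: the only non-routine ingredient, the per-bundle estimate, is already supplied by Lemma~\ref{lemma:moving-to-representatives}, and everything else is a one-line summation that uses the partition property of Voronoi regions and the precomputed value of $D_F$. The only minor thing worth double-checking is that the constant $O(1)$ one gets is indeed absolute (it is $4$ from the proof of Lemma~\ref{lemma:moving-to-representatives}, so the corollary yields an $8\LP$ bound), but since the statement only asks for $O(1)\LP$ this verification is immediate.
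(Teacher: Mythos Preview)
Your proposal is correct and follows exactly the same approach as the paper: sum Lemma~\ref{lemma:moving-to-representatives} over all $v\in R$, use that $\{U_v:v\in R\}$ partitions $F$ so that $\sum_{v\in R}D_{U_v}=D_F$, and then invoke $D_F=2\LP$.
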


\noindent{\bf Initial Moving of Demands}\ \ With this corollary, we now move all the demands from $C$ to $R$. First for every $j \in C$ and $i \in F$, we move $x_{i, j}$ units of demand from $j$ to $i$. The moving cost of this step is exactly $\LP$.  After the step, all demands are at $F$ and every $i \in F$ has $x_{i, C}$ units of demand. Then, for every $v \in R$ and $i \in U_v$, we move the $x_{i,C}$ units of demand at $i$ to $v$. The moving cost for this step is $O(1)\LP$. Thus, after the initial moving, all demands are at the set $R$ of representatives: a representative $v$ has $x_{U_v, C}$ units of demand. 

\subsection{Black Components}

	In the second phase, we employ the minimum-spanning-tree construction of \cite{Li16} to partition the set $R$ of representatives into a family $\calJ$ of so-called \emph{black components}. There is a degree-2 rooted forest $\Upsilon^*_\calJ$ over $\calJ$ with many good properties.  For example, each non-root black component is not far away from its parent, and each root black component of $\Upsilon^*_\calJ$ contains a total opening of $\Omega(\ell)$. (For simplicity, we say the total opening at a representative $v \in R$ is $y_{U_v}$, which is the total opening at the bundle $U_v$.)  The forest in \cite{Li16} can have a large degree, while our algorithm requires the forest to have degree 2. This property is guaranteed by using the left-child-right-sibling representation.
	
\ifdefined\proceeding
	Due to the page limit, we leave the description of the framework of \cite{Li16} to the full version of the paper, and give its summary in the following lemma:
\else
	We now describe the framework of \cite{Li16}. We run the classic Kruskal's algorithm to find the minimum spanning tree $\MST$ of the metric $(R, d)$, and then color the edges in $\MST$ in black, grey or white. In Kruskal's algorithm, we maintain the set $E_\MST$ of edges added to $\MST$ so far and a partition $\mathfrak{P}$ of $R$. Initially, we have $E_\MST = \emptyset$ and $\mathfrak{P} = \set{\{v\}: v\in R}$. The length of an edge $e \in {R \choose 2}$ is the distance between the two endpoints of $e$. We sort all edges in $R \choose 2$ in the ascending order of their lengths, breaking ties arbitrarily. For each pair $({v}, {v'})$ in this order, if ${v}$ and ${v'}$ are not in the same partition in $\mathfrak{P}$, we add the edge $({v}, {v'})$ to $E_\MST$ and merge the two partitions containing ${v}$ and ${v'}$ respectively.  
	
	We then color edges in $E_\MST$.  For every $v \in R$, we say the \textit{weight} of $v$ is $y_{U_v}$; so every representative $v \in R$ has weight at least $1/2$ by Property~\ref{property:bundle-large}. For a subset $J \subseteq R$ of representatives, we say $J$ is big if the weight of $J$ is at least $\ell$, i.e, $y_{U(J)} \geq \ell$; we say $J$ is small otherwise.  For any edge $e = ({v}, {v'})\in E_\MST$, we consider the iteration in Kruskal's algorithm in which the edge $e$ is added to $\MST$. After the iteration we merged the partition $J_{v}$ containing ${v}$ and the partition $J_{v'}$ containing ${v'}$ into a new partition $J_{v} \cup J_{v'}$.  If both $J_{v}$ and $J_{v'}$ are small, then we call $e$ a black edge.  If $J_{v}$ is small and $J_{v'}$ is big, we call $e$ a grey edge, directed from ${v}$ to ${v'}$; similarly, if $J_{v'}$ is small and $J_{v}$ is big, $e$ is a grey edge directed from ${v'}$ to ${v}$. If both $J_{v}$ and $J_{v'}$ are big, we say $e$ is a white edge. So, we treat black and white edges as undirected edges and grey edges as directed edges. 
	
	We define a black component of $\MST$ to be a maximal set of vertices connected by black edges. Let $\calJ$ be the set of all black components.  Thus $\calJ$ indeed forms a partition of $R$. We contract all the black edges in $\MST$ and remove all the white edges. The resulting graph is a forest $\Upsilon_\calJ$ of trees over black components in $\calJ$. Each edge is a directed grey edge. Later in Lemma~\ref{lemma:contracted-tree}, we show that the grey edges are directed towards the roots of the trees. For every component $J \in \calJ$, we define $L(J) := d(J, R \setminus J)$ to be the shortest distance between any representative in $J$ and any representative not in $J$.  
	
	A component $J$ in the forest $\Upsilon_\calJ$ may have many child-components. To make the forest binary, we use the left-child-right-sibling binary-tree representation of trees.  To be more specific, for every component $J'$, we sort all its child-components $J$ according to non-decreasing order of $L(J)$.  We add a directed edge from the first child to $J'$ and a directed edge between every two adjacent children in the ordering, from the child appearing later in the ordering to the child appearing earlier.  Let $\Upsilon^*_\calJ$ be the new forest. $\Upsilon^*_\calJ$ naturally defines a new child-parent relationship between components.
\fi

	\begin{lemma}\label{lemma:contracted-tree}
\ifdefined\proceeding
		There is an efficient algorithm to partition $R$ into a set $\calJ$ of \emph{black components} (or components, for simplicity) and construct a rooted forest $\Upsilon^*_\calJ$ over $\calJ$, such that if we let $L(J) = d(J, R \setminus J)$ for every black component $J \in \calJ$, then the following properties hold:
\else
		$\calJ$ and $\Upsilon^*_\calJ$ satisfy the following properties:
\fi
		\begin{properties}{lemma:contracted-tree}
			\item for every $J \in \calJ$, there is a spanning tree over the representatives in $J$ such that for every edge $(v, v')$ in the spanning tree we have $d(v, v')\leq L(J)$; \label{property:black-edges-short}
			\item every root component $J \in \calJ$ of $\Upsilon^*_\calJ$ has $y_{U(J)} \geq \ell$ and every non-root component $J \in \calJ$ has $y_{U(J)} < \ell$; \label{property:black-weights-of-components}
			\item every root component $J \in \calJ$ of $\Upsilon^*_\calJ$ has either $y_{U(J)} < 2\ell$ or $|J| = 1$; \label{property:black-root-is-big}
			\item for any non-root component $J$ and its parent $J'$, we have $L(J) \geq L(J')$; \label{property:black-lengths-decrease}
			\item for any non-root component $J$ and its parent $J'$, we have $d(J, J') \leq O(\ell)L(J)$; \label{property:black-grey-edge-length}
			\item every component $J$ has at most two children. \label{property:black-binary}
		\end{properties}
	\end{lemma}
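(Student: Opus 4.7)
The plan is to verify the six properties by working directly with the Kruskal coloring already described. Property (f) is immediate from the left-child-right-sibling construction, since each node in the binary representation acquires at most a left and a right child. The rest of the work concentrates on (a)--(e), with (a) being the real technical step.

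For (a), I would first observe that any MST edge with both endpoints in $J$ must be black: a grey or white edge connects two distinct black components, which is impossible inside a single one. Combined with the maximality of $J$ and the acyclicity of the MST, this gives a spanning tree of $J$ inside the MST. To bound their lengths, fix such an edge $e = (u, w)$, let $e^* = (v^*, v'^*)$ realize $L(J)$ (so $e^* \in \MST$ by the cut property), and suppose for contradiction that $|e^*| < |e|$. Let $V$ be the Kruskal component of $v^*$ when $e^*$ is processed. A short case check rules out that both $V$ and the Kruskal component of $v'^*$ are small at that moment, since otherwise $e^*$ would be a black edge putting $v^* \in J$ and $v'^* \notin J$ in the same black component. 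Hence after $e^*$ is processed, the Kruskal component containing $V$ is big, so no subsequent black edge touches any vertex of $V$. Because $J$ equals the final black-connected set of $v^*$ and this set can no longer grow beyond $V$, we get $J \subseteq V$. But then $u, w \in J \subseteq V$ means $u$ and $w$ already share a Kruskal component before $e$ is processed, contradicting $e \in \MST$.

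For the remaining properties, I would trace which Kruskal components first become big. A small Kruskal component is always a single black component and eventually merges with a big one via a single outgoing grey edge, while big components have no outgoing grey edge; an edge-counting argument then shows that each tree of $\Upsilon_\calJ$ has a unique big black component as its root, which is inherited by $\Upsilon^*_\calJ$, giving (b). When a Kruskal component first becomes big, the triggering black edge joins two components of weight strictly less than $\ell$, giving combined weight strictly less than $2\ell$; the exception of a single representative with $y_{U_v} \geq \ell$ corresponds to $|J| = 1$, giving (c). Moreover, by the argument in (a), all internal black edges of $J$ are processed strictly before any $J$-boundary edge, so the first grey edge leaving $J$ has length exactly $L(J)$. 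For (d), if $J'$ is the original $\Upsilon_\calJ$-parent of $J$, then $J'$'s own first outside-merge happens no later than $J$'s, so $L(J') \leq L(J)$; if $J'$ is instead a sibling predecessor in the LC-RS ordering, the inequality is immediate by sorting. For (e), the original-parent case is certified directly by the grey edge: $d(J, J') \leq L(J)$. In the sibling case I would route through the common original parent $J''$, getting $d(J, J') \leq L(J) + \diam(J'') + L(J')$; property (a) gives $\diam(J'') \leq (|J''| - 1) L(J'')$, and since every representative has $y_{U_v} \geq 1/2$ and $y_{U(J'')} = O(\ell)$ (either $< \ell$ for non-roots or $< 2\ell$ for non-singleton roots), we conclude $|J''| = O(\ell)$ and hence $d(J, J') = O(\ell) L(J)$. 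The main obstacle is the Kruskal-ordering step in (a): once one knows that no short boundary edge sneaks inside $J$ before $J$'s interior is fully assembled, everything else reduces to bookkeeping on the coloring and the LC-RS conversion.
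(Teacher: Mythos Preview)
Your proof is correct and follows essentially the same route as the paper: the Kruskal-ordering contradiction for (a), the grey-edge orientation and counting for (b), the last-black-edge bound for (c), and the routing through the original parent $J''$ with the $|J''| = O(\ell)$ bound for (e) all match the paper's argument. The only cosmetic difference is that you phrase (a) as a length comparison ($|e^*| < |e|$) whereas the paper phrases it directly as a processing-order statement; since the contradiction you derive really only uses that $e^*$ is processed before $e$, your argument already yields the ordering fact you invoke later for (d), so there is no actual gap.
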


\ifdefined\proceeding
\else
	The rest of the section is dedicated to the proof of Lemma~\ref{lemma:contracted-tree}. We first prove some of the above properties for \emph{the original forest} $\Upsilon_\calJ$. We show that all black edges between the representatives in $J$ are considered before all the edges in $J \times (R \setminus J)$ in Kruskal's algorithm.  Assume otherwise.  Consider the first edge $e$ in $J \times (R \setminus J)$ we considered. Before this iteration, $J$ is not connected yet.   Then we add $e$ to the minimum spanning tree; since $J$ is a black component, $e$ is gray or white. In either case, the new partition $J'$ formed by adding $e$ will have weight more than $\ell$. This implies all edges in $J' \times (R \setminus J')$ added later to the MST are not black. Moreover, $J\setminus J', J' \setminus J$ and $J \cap J'$ are all non-empty. This contradicts the fact that $J$ is a black component.   Therefore, all black edges in $J$ has length at most $L(J)$, implying Property~\ref{property:black-edges-short} .
	
	Focus on a tree $T$ in the initial forest $\Upsilon_\calJ$ and any small black component $J$ in $T$. All black edges between the representatives in $J$ are added to $\MST$ before any edge in $J \times (R \setminus J)$. The first edge in $J \times (R \setminus J)$ added to $\MST$ is a grey edge directed from $J$ to some other black component: it is not white because $J$ is small; it is not black since $J$ is a black component. Thus, it is a grey edge in $T$.   Therefore, the growth of the tree $T$ in Kruskal's algorithm is as follows. The first grey edge in $T$ is added between two black components, one of them is big and the other is small.  We define the root of $T$ to be the big component.  At each time, we add a new small black component  $J$ to the current $T$ via a grey edge directed from $J$ to $T$. (During this process, white edges incident to $T$ may be added.)  So, the tree $T$ is a rooted tree with grey edges, where all edges are directed towards the root. So, Property~\ref{property:black-weights-of-components} holds.    Moreover, the length of the grey edge between $J$ and its parent $J'$ is $d(J, J') = L(J)$, which is stronger than Property~\ref{property:black-grey-edge-length}. Since $d(J, J') \geq d(J', R \setminus J') = L(J')$, we have Property~\ref{property:black-lengths-decrease}.
	
	The root $J$ of $T$ is a big black component. Suppose it contains two or more representatives; so it's not a singleton. Consider the last black edge $({v}, {v'})$ added between $J_{v}$ and $J_{v'}$ to make $J = J_{v} \cup J_{v'}$. Since $({v},{v'})$ is a black edge, both $J_{v}$ and $J_{v'}$ are small, i.e. $y_{U(J_{v})}, y_{U(J_{v'})} <  \ell$. Therefore, we have $y_{U(J)} = y_{U(J_{v})} +y_{U(J_{v'})} < 2 \ell$, proving Property \ref{property:black-root-is-big}. 	
	
	Now, we move on to prove all the properties of the lemma for \emph{the final forest} $\Upsilon^*_{\calJ}$.  We used the left-child-right-sibling binary-tree representation of $\Upsilon_\calJ$ to obtain $\Upsilon^*_\calJ$. Thus , Property~\ref{property:black-binary} holds for $\Upsilon^*_\calJ$.  Property~\ref{property:black-edges-short} is independent of the forest and thus still holds for $\Upsilon^*_\calJ$. A component is a root in $\Upsilon_\calJ$ if and only if it is a root in $\Upsilon^*_\calJ$. Thus, properties~\ref{property:black-weights-of-components} and \ref{property:black-root-is-big} are maintained for $\Upsilon^*_\calJ$.   Since we sorted the children of a component according to $L$ values before constructing the left-child-right-sibling binary tree, Property~\ref{property:black-lengths-decrease} holds for $\Upsilon^*_\calJ$. 
	
	For every component $J$ and its parent $J'$ in the forest $\Upsilon^*_{\calJ}$, we have $L(J) = d(J, R \setminus J) = d(J, J'')$, where $J''$ is the parent of $J$ in the initial forest $\Upsilon_\calJ$. $J'$ is either $J''$, or a child of $J''$ in $\Upsilon_\calJ$.  In the former case, we have $d(J, J') = d(J, J'') = L(J)$.  In the latter case, we have that $d(J', J'') = L(J') \leq L(J) = d(J, J'')$. Due to Property~\ref{property:black-edges-short}, we have a path connecting some representative in $J$ to some representative in $J'$, with internal vertices being representatives in $J''$, and all edges having length at most $L(J)$. Moreover, there are at most $4\ell$ representatives in $J''$ due to Properties~\ref{property:black-weights-of-components}, \ref{property:black-root-is-big}, and \ref{property:bundle-large}. Thus, we have $d(J, J') \leq O(\ell)d(J, J'') = O(\ell)L(J)$.  Thus, Property~\ref{property:black-grey-edge-length} holds for $\Upsilon^*_\calJ$.   This finishes the proof of Lemma~\ref{lemma:contracted-tree}.
\fi
\subsection{Groups}

In the third phase, we apply a simple greedy algorithm to the forest $\Upsilon^*_{\calJ}$ to partition the set $\calJ$ of black components into a family $\calG$ of \emph{groups}, where each group $G \in \calG$ contains many black components that are connected in $\Upsilon^*_\calJ$.  By contracting each group $G \in \calG$, the forest $\Upsilon^*_\calJ$ over the set $\calJ$ of black components becomes a forest $\Upsilon_\calG$ over the set $\calG$ of groups.  Each group has a total opening of $\Omega(\ell)$, unless it is a leaf-group in $\Upsilon_\calG$.

We partition the set $\calJ$ into groups using a technique similar to \cite{BFR15, BRU15}. 
For each rooted tree $T = (\calJ_T, E_T)$ in $\Upsilon^*_\calJ$, we construct a group $G$ of black components as follows. Initially, let $G$ contain the root component of $T$.  While $\sum_{J \in G} y_{U(J)} < \ell$ and $G \neq \calJ_T$, repeat the following procedure. Choose the component $J \in \calJ_T \setminus G$ that is adjacent to $G$ in $T$, with the smallest $L$-value, and add $J$ to $G$.
	
	Thus, by the construction $G$ is connected in $T$. After we have constructed the group $G$, we add $G$ to $\calG$. We remove all black components in $G$ from $T$. Then, each $T$ is broken into many rooted trees; we apply the above procedure recursively for each rooted tree. 
	
	So, we have constructed a partition $\calG$ for the set $\calJ$ of components. If for every $G \in \calG$, we contract all components in $G$ into a single node, then the rooted forest $\Upsilon^*_\calJ$ over $\calJ$ becomes a rooted forest $\Upsilon_\calG$ over the set $\calG$ of groups.  $\Upsilon_\calG$ naturally defines a parent-child relationship over $\calG$.  The following lemma uses Properties~\ref{property:black-edges-short} to \ref{property:black-binary} of $\calJ$ and the way we construct $\calG$.

\begin{lemma} \label{lemma:groups}
	The following statements hold for the set $\calG$ of groups and the rooted forest $\Upsilon_\calG$ over $\calG$:
		\begin{properties}{lemma:groups}
			\item any root group $G \in \calG$ contains a single root component $J \in \calJ$; \label{property:groups-root-sington}
			\item if $G \in \calG$ is not a root group,  then $\sum_{J \in G}y_{U(J)} < 2\ell$; \label{property:groups-non-root-small}
			\item if $G \in \calG$ is a non-leaf group, then $\sum_{J \in G} y_{U(J)} \geq \ell$;
			\label{property:groups-non-leaf-big}
			\item let $G \in \calG, G' \in \calG$ be the parent of $G$, $J \in G$ and $v \in J$, then the distance between $v$ and any representative in $\union_{J' \in G'}J'$ is at most $O(\ell^2)L(J)$; \label{property:groups-close}
			\item any group $G$ has at most $O(\ell)$ children. \label{property:groups-few-children}
		\end{properties}
\end{lemma}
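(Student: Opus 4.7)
My plan is to verify the five statements in turn, in all cases reading off properties directly from the greedy group construction and Lemma~\ref{lemma:contracted-tree}. For (a), a root component $J$ of a tree $T$ in $\Upsilon^*_\calJ$ already has $y_{U(J)} \geq \ell$ by Property~\ref{property:black-weights-of-components}, so the while-condition $\sum y < \ell$ is false at initialization and the root group stays a singleton. For (b), a non-root group $G$ comes from a recursive call on a subtree $T'$ whose root is a non-root component; every $J\in G$ therefore has $y_{U(J)} < \ell$ by the same property, and since just before the loop-terminating step we had $\sum y < \ell$, adding one more such component keeps the total strictly below $2\ell$ (the case $G=\calJ_{T'}$ gives an even smaller sum). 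For (c), a non-leaf $G$ has children in $\Upsilon_\calG$, so after removing $G$ the subtree is non-empty, i.e.\ $G\ne\calJ_{T'}$, forcing the loop to have exited via $\sum_{J\in G}y_{U(J)}\geq \ell$; the case of root groups is handled directly by (a) and Property~\ref{property:black-weights-of-components}.

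For (e) I will bound the number of edges of $\Upsilon^*_\calJ$ that leave $G$, since each such edge that goes downward gives rise to one child of $G$ in $\Upsilon_\calG$. Because $G$ is connected in $\Upsilon^*_\calJ$ it contains $|G|-1$ internal parent-child edges, and since Property~\ref{property:black-binary} limits out-degree to two, the number of parent-child edges incident to $G$ is at most $2|G|$, so at most $|G|+1$ of them leave $G$. Combining (a), (b) with the fact that every component has weight at least $1/2$ (Property~\ref{property:bundle-large}) gives $|G|=O(\ell)$, and hence at most $O(\ell)$ children.

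The main technical step is (d). Let $J_0$ be the component of $G$ nearest the root of $\Upsilon^*_\calJ$, so that its parent $J_0^{\mathrm{par}}$ lies in $G'$. Because $G$ grew downward from $J_0$, the unique path in $\Upsilon^*_\calJ$ from $J$ up to $J_0$ stays inside $G$, and by iterating Property~\ref{property:black-lengths-decrease} along this path every component on it has $L$-value at most $L(J)$. Property~\ref{property:black-edges-short} then gives each such component diameter $O(\ell) L(J)$ (using $|J^*|=O(\ell)$, which follows from Properties~\ref{property:bundle-large}, \ref{property:black-weights-of-components}, \ref{property:black-root-is-big}), and Property~\ref{property:black-grey-edge-length} gives each grey edge on this path length $O(\ell) L(J)$; summing over the $O(\ell)$ components of the path yields $d(v,J_0) = O(\ell^2) L(J)$. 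The grey edge $d(J_0,J_0^{\mathrm{par}})$ adds another $O(\ell) L(J_0) \leq O(\ell) L(J)$.

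The remaining obstacle is to control distances inside $G'$, where I need a uniform bound on $L(J_*)$ for every $J_*\in G'$, not just those on a specific path. I will establish a greedy-monotonicity lemma: within the construction of any group, the $L$-values of successive additions form a non-decreasing sequence. This follows because Property~\ref{property:black-lengths-decrease} implies every newly exposed candidate (a child of the last addition) has $L$-value at least that of the addition, while the other candidates were already available and by greedy minimality have $L$-value at least that of the addition as well. Applied to $G'$: from the moment $J_0^{\mathrm{par}}$ is added to $G'$, the component $J_0$ is a valid candidate in $G'$ (but is not taken, since $J_0$ ends up in $G$), so every subsequent addition has $L$-value at most $L(J_0)$; by monotonicity every prior addition is then also at most $L(J_0)\leq L(J)$. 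Applying Properties~\ref{property:black-edges-short} and \ref{property:black-grey-edge-length} to the $O(\ell)$ components of $G'$ then gives an $O(\ell^2) L(J)$ bound on distances inside $G'$, and summing the three pieces completes~(d).
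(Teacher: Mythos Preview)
Your proof is correct and follows essentially the same approach as the paper. Parts (a), (b), (c), and (e) are nearly identical to the paper's arguments, just with more detail spelled out (the paper's (e), for instance, is a single sentence invoking the binary property and the $O(\ell)$ component bound).

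For (d), the paper's version is slightly different in packaging: rather than proving your greedy-monotonicity lemma, it asserts the invariant that when a group $H$ is carved out of a tree $T$, every $L$-value inside $H$ is at most every $L$-value in $\calJ_T\setminus H$; applying this to $G'$ immediately gives $L(J')\le L(J_0)\le L(J)$ for all $J'\in G'$. Your monotonicity-plus-candidate argument is really the inductive proof of that same invariant, so the two routes coincide. One small point of exposition: your sentence ``by monotonicity every prior addition is then also at most $L(J_0)$'' tacitly needs the observation $L(J_0^{\mathrm{par}})\le L(J_0)$ in the edge case where $J_0^{\mathrm{par}}$ is the \emph{last} addition to $G'$ (so there are no ``subsequent additions'' to chain through); this follows at once from Property~\ref{property:black-lengths-decrease}, and indeed is a special case of your own monotonicity proof (every candidate after adding $J_0^{\mathrm{par}}$, including $J_0$, has $L$-value at least $L(J_0^{\mathrm{par}})$). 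With that one-line clarification, your argument is complete.
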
	

\ifdefined\proceeding
\else
\begin{proof}
	For a root component $J$, we have $y_{U(J)} \geq \ell$ by Property~\ref{property:black-weights-of-components}. Thus, any root group $G$ contains a single root component $J$, which is exactly Property~\ref{property:groups-root-sington}.
	
	When constructing the group $G$ from the tree $T = (\calJ_T, E_T)$, the terminating condition is $G = \calJ_T$ or $\sum_{J \in G}y_{U(J)} \geq \ell$. Thus, if $G$ is not a leaf-group, then the condition $G = \calJ_T$ does not hold; thus we have $\sum_{J \in G}y_{U(J)} \geq \ell$, implying Property \ref{property:groups-non-leaf-big}. 
	
	By Property~\ref{property:black-weights-of-components}, any non-root component $J$ has $y_{U(J)} < \ell$. Thus, if $G$ is not a root group, the terminating condition constructing $G$ implies that $G$ had total weight less than $\ell$ right before the last black component was added to it. Then we have $\sum_{J \in G}y_{U(J)} < 2\ell$, implying Property~\ref{property:groups-non-root-small}.
	
	Now, consider Property~\ref{property:groups-close}. From Property~\ref{property:black-lengths-decrease}, it is easy to see that the group $G$ constructed from the tree $T = (\calJ_T, E_T)$ has the following property: the $L$ value of any component in $G$ is at most the $L$-value of any component in $\calJ_T \setminus G$.   Let $G$ be a non-root group and $G'$ be its parent; let $J \in G$ and $J' \in G'$ be black components. Thus, there is a path in $\Upsilon^*_\calJ$ from $J$ to $J'$, where components have $L$-values at most $L(J)$.   The edges in the path have length at most $O(\ell)L(J)$ by Property~\ref{property:black-grey-edge-length}.  Moreover, Property~\ref{property:black-edges-short} implies that the representatives in each component in the path are connected by edges of length at most $L(J)$. Thus,  we can find a path from $v$ to $v'$ that go through representatives in $\union_{J'' \in G \cup G'} J''$, and every edge in the path has length at most $O(\ell)L(J) = O(\ell)d(J, R\setminus J)$.  By Property~\ref{property:bundle-large}, \ref{property:black-weights-of-components} and \ref{property:black-root-is-big}, the total representatives in the components contained in $G$ (as well as in $G'$) is at most $4\ell$.  Thus, the distance between $v$ and $v'$ is at most $O(\ell^2)L(J)$, which is exactly Property~\ref{property:groups-close}.	
	
	Finally, since the forest $\Upsilon^*_\calJ$ is binary and every group $G \in \calG$ contains at most $O(\ell)$ components, we have that every group $G$ contains at most $O(\ell)$ children, implying Property~\ref{property:groups-few-children}.
\end{proof}
\fi
	\section{Constructing Local Solutions}
	\label{sec:distribution}
	
	In this section, we shall construct a local solution, or a distribution of local solutions, for a given set $V \subseteq R$ which is the union of some black components.  A local solution for $V$ contains a pair $(S \subseteq U(V), \beta \in \R_{\geq 0}^{U(V)})$, where $S$ is the facilities we open in $U(V)$ and $\beta_i$ for each $i \in U(V)$ is the amount of supply at $i$: the demand that can be satisfied by $i$. Thus $\beta_i = 0$ if $i \in U(V) \setminus S$. We shall use the supplies at $U(V)$ to satisfy the $x_{U(V), C}$ demands at $V$ after the initial moving of demands; thus, we require $\sum_{i \in U(V)}\beta_i = x_{U(V), C}$. There are two other main properties we need the distribution to satisfy: (a) the expected size of $S$ from the distribution is not too big, and (b) the cost of matching the demands at $V$ and the supplies at $U(V)$ is small. 
	
	We distinguish between \emph{concentrated} black components and \emph{non-concentrated} black components.  Roughly speaking, a component $J \in \calJ$ is concentrated if in the fractional solution $(x, y)$, for most clients $j \in C$, $j$ is either almost fully served by facilities in $U(J)$, or almost fully served by facilities in $F \setminus U(J)$.  We shall construct a distribution of local solutions for each concentrated component $J$. We require Constraints~\eqref{CLPC:add-to-one} to~\eqref{CLPC:more-than-ell-facilities} to be satisfied for $B = U(J)$ (if not, we return the set $U(J)$ to the separation oracle) and let $z^B$ be the vector satisfying the constraints. Roughly speaking, the $z^B$-vector defines a distribution of local solutions for $V$.  A local solution $(S, \beta)$ is good if $S$ is not too big and the total demand $\sum_{i \in S}\beta_i$ satisfied by $S$ is not too small.  Then, our algorithm randomly selects $(S, \beta)$ from the distribution defined by $z^B$, under the condition that $(S, \beta)$ is good.  The fact that $J$ is concentrated guarantees that the total mass of good local solutions in the distribution is large; therefore the factors we lose due to the conditioning are small. 
	
	For non-concentrated components,  we construct a single local solution $(S, \beta)$, instead of a distribution of local solutions. Moreover, the construction is for the union $V$ of some non-concentrated components, instead of an individual component.  The components that comprise $V$ are close to each other; by the fact that they are non-concentrated, we can move demands arbitrarily within $V$, without incurring too much cost.  Thus we can essentially treat the distances between representatives in $V$ as $0$. Then we are only concerned with two parameters for each facility $i \in U(V)$: the distance from $i$ to $V$ and the capacity $u_i$.  Using a simple argument, the optimum fractional local solution (that minimizes the cost of matching the demands and supplies) is almost integral: it contains at most 2 fractionally open facilities.  By fully opening the two fractional facilities, we find an integral local solution with small number of open facilities. 

	The remaining part of this section is organized as follows.  We first formally define concentrated black components, and explain the importance of the definition. We then define the earth-mover-distance, which will be used to measure the cost of satisfying demands using supplies. The construction of local solutions for concentrated components and non-concentrated components will be stated in Theorem~\ref{thm:concentrated-sets} and Lemma~\ref{lemma:non-concentrated-sets} respectively. 
\ifdefined\proceeding
	Due to the page limit, their proofs will only appear in the full version of the paper. 
\smallskip

\noindent{\bf Concentrated Black Components}\ \ 
\else
	\subsection{Concentrated Black Components and Earth Mover Distance}
\fi	
	The definition of concentrated black component is the same as that of \cite{Li16}, except that we choose the parameter $\ell_2$ differently. 		
	\begin{definition}\label{def:concentrated-compoenents}
		Define $\pi_J = \sum_{j \in C}x_{U(J),j} (1-x_{U(J), j})$, for every black component $J \in \calJ$.   A black component $J \in \calJ$ is said to be \emph{concentrated} if $\pi_J \leq x_{U(J), C} /\ell_2$, and \emph{non-concentrated} otherwise, where $\ell_2 = \Theta(\ell^3)$ is large enough.
	\end{definition}
	
	We use $\calJ^\sfC$ to denote the set of concentrated components and $\calJ^\sfN$ to denote the set of non-concentrated components. The next lemma from \cite{Li16} shows the importance of $\pi_J$.
\ifdefined\proceeding
\else 
	For the completeness of the paper, we include its proof here.
\fi
	\begin{lemma}
		\label{lemma:x-j-times-one-minus-x-j-small}
		For any $J \in \calJ$, we have  $L(J) \pi_J \leq O(1)D_{U(J)}$.
	\end{lemma}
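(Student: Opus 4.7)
The plan is to establish the pointwise inequality $L(J) \cdot x_{U(J), j}(1 - x_{U(J), j}) \leq O(1) \cdot D_{U(J), j}$ for each $j \in C$, where I write $D_{U(J), j} := \sum_{i \in U(J)} x_{i, j}(d(i, j) + \dav(j))$. Summing over $j$ then gives the lemma, since $\pi_J = \sum_j x_{U(J), j}(1 - x_{U(J), j})$ and $D_{U(J)} = \sum_j D_{U(J), j}$.

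The key geometric fact I would extract first is: if $i \in U_{v'}$ and a representative $v \in R$ satisfies ``exactly one of $v, v'$ lies in $J$'', then $d(v, i) \geq L(J)/2$. This is a short triangle-inequality sandwich: $L(J) \leq d(v, v') \leq d(v, i) + d(i, v')$ by the definition of $L(J)$, and $d(i, v') \leq d(v, i)$ by Voronoi-minimality of $v'$. Applying this with $v = v(j)$, the representative guaranteed by Property~\ref{property:near-a-representative} (so $d(v(j), j) \leq 4 \dav(j)$), yields $d(i, j) \geq L(J)/2 - 4 \dav(j)$ for every such ``wrong-side'' facility $i$.

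The remainder is a two-case analysis. When $v(j) \in J$, the wrong-side facilities are those in $F \setminus U(J)$, carrying total $x_{i, j}$-mass $1 - x_{U(J), j}$. Summing the distance bound gives $\dav(j) \geq (1 - x_{U(J), j})(L(J)/2 - 4 \dav(j))$, which rearranges to $(1 - x_{U(J), j}) L(J) \leq 10 \dav(j)$ (the degenerate case $L(J) \leq 8 \dav(j)$ is immediate). Multiplying by $x_{U(J), j}$ and using $x_{U(J), j} \dav(j) \leq D_{U(J), j}$ concludes this case. When $v(j) \notin J$, the wrong-side facilities are precisely those in $U(J)$; summing $x_{i, j}\bigl(d(i, j) + \dav(j)\bigr) \geq x_{i, j}(L(J)/2 - 3 \dav(j))$ over $i \in U(J)$ yields $D_{U(J), j} \geq x_{U(J), j}(L(J)/2 - 3 \dav(j))$. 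Combining this with the trivial bound $D_{U(J), j} \geq x_{U(J), j} \dav(j)$ (adding three times the trivial bound to the distance bound) eliminates the error term and gives $D_{U(J), j} \geq x_{U(J), j} L(J)/8$, which dominates the target quantity.

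The main obstacle I expect is handling the $4\dav(j)$ error in the second case: the bound on $\sum_{i \in U(J)} x_{i, j} d(i, j)$ alone becomes vacuous once $\dav(j)$ is comparable to $L(J)$. The resolution is to exploit the extra $\dav(j)$ summand sitting inside $D_{U(J), j}$ to absorb the slack, which is precisely the reason the definition of $D_i$ carries both a $d(i, j)$ and a $\dav(j)$ factor rather than just $d(i, j)$.
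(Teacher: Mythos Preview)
Your argument is correct. The key geometric observation---that any facility $i$ lying in a Voronoi cell on the ``other side'' of the cut $J$ from a representative $v$ satisfies $d(v,i)\geq L(J)/2$---is exactly the same one the paper exploits, and both proofs ultimately yield a constant around~$10$.

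The execution differs, though. You split on whether the nearby representative $v(j)$ of Property~\ref{property:near-a-representative} lies in $J$ or not, and in each case bound the ``wrong-side'' $x$-mass pointwise. The paper instead expands $x_{B,j}(1-x_{B,j})=\sum_{i\in B}\sum_{i'\notin B}x_{i,j}x_{i',j}$ and, for each pair $(i,i')$, bounds $L(J)\leq 2d(i',J)$ and then routes $d(i',J)$ through $j$ and $i$ via triangle inequalities together with Property~\ref{property:facility-to-representative}; this handles both ``sides'' in one symmetric chain, avoiding the case analysis entirely. Your approach is slightly more pedestrian but makes the role of the extra $\dav(j)$ summand in $D_i$ (absorbing the $4\dav(j)$ slack) more transparent, whereas the paper's chain hides that step inside one line.
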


\ifdefined\proceeding
\else
	\begin{proof}
		Let $B = U(J)$. For every $i \in B, j \in C$, we have $d(i, J) \leq d(i, j) + 4\dav(j)$ by Property~\ref{property:facility-to-representative} and the fact that $i \in U_v$ for some $v \in J$. Thus, 
		\begin{align*}
			&\quad L(J)\pi(J) = L(J)\sum_{j \in C}x_{B, j}(1-x_{B,j}) 
			= L(J)\sum_{j \in C, i \in B, i' \in F \setminus B}x_{i,j}x_{i',j}\\
			&\leq \sum_{i \in B, j \in C, i' \in F \setminus B}x_{i,j} x_{i',j} \cdot 2d(i', J) 
			\leq 2\sum_{i \in B, j \in C}x_{i,j}\sum_{i' \in F}x_{i',j}\big(d(i',j) + d(j, i) + d(i, J)\big)\\
			&=2\sum_{i \in B, j \in C}x_{i,j} \Big(\dav(j) + d(j, i) + d(i, J)\Big)
			\leq 2\sum_{i \in B, j \in C}x_{i,j} \Big(2d(i, j) + 5\dav(j) \Big) \\
			&=2\sum_{i \in B}(5D_i) = 10D_B.
		\end{align*}
		The first inequality is by $L(J) \leq 2 d(i', J)$ for any $i' \in F\setminus B = U_{R \setminus J}$: $d(i', R \setminus J) \leq d(i', J)$ implies $L(J) = d(R \setminus J, J) \leq d(R \setminus J, i') + d(i', J) \leq 2d(i', J)$. The second inequality is by triangle inequality and the third one is by $d(i, J) \leq d(i, j) + 4\dav(j)$. All the equalities are by simple manipulations of notations. 
	\end{proof}
\fi	
	
	Recall that $L(J) = d(J, R \setminus J)$ and $x_{U(J),C}$ is the total demand in $J$ after the initial moving. Thus, according to Lemma~\ref{lemma:x-j-times-one-minus-x-j-small}, if $J$ is not concentrated, we can use $D_{U(J)}$ to charge the cost for moving all the $x_{U(J), C}$ units of demand out of $J$, provided that the moving distance is not too big compared to $L(J)$. This gives us freedom for handling non-concentrated components. If $J$ is concentrated, the amount of demand that is moved out of $J$ must be comparable to $\pi_J$; this will be guaranteed by the configuration LP. 
	
\ifdefined\proceeding
	\smallskip
	
	\noindent{\bf Earth Mover Distance}\ \ 
\fi
	In order to measure the moving cost of satisfying demands using supplies, we define the earth mover distance:
	\begin{definition}[Earth Mover Distance] Given a set $V \subseteq R$ with $B = U(V)$, a demand vector $\alpha \in \R_{\geq 0}^V$ and a supply vector $\beta \in \R_{\geq 0}^B$ such that $\sum_{v \in V}\alpha_v \leq \sum_{i \in B}\beta_i$, the earth mover distance from $\alpha$ to $\beta$ is defined as
			$\EMD_V(\alpha, \beta) := \inf_{f}\sum_{v \in V, i \in B}f(v, i) d(v, i)$,
		where $f$ is over all functions from $V \times B$ to $\R_{\geq 0}$ such that 
		\vspace*{-8pt}
		
		\begin{itemize}
			\item $\sum_{i \in B}f(v, i) = \alpha_v$ for every $v \in V$;
			\item $\sum_{v \in V}f(v, i) \leq \beta_i$ for every $i \in B$.
		\end{itemize}
	\end{definition}
	For some technical reason, we allow some fraction of a supply to be unmatched. From now on, we shall use $\alpha_v = x_{U_v, C}$ to denote the amount of demand at $v$ after the initial moving.  For any set $V \subseteq R$ of representatives, we use $\alpha|_V$ to denote the vector $\alpha$ restricted to the coordinates in $V$.  

\ifdefined\proceeding
	We now summarize our constructions of local solutions for concentrated and non-concentrated black components, respectively.  
\else
	\subsection{Distributions of Local Solutions for Concentrated Components}
	\label{subsec:distri-con}
	In this section, we construct distributions for components in $\calJ^\sfC$, by proving:
\fi	
	\begin{theorem} \label{thm:concentrated-sets} 
		Let $J \in \calJ^\sfC$ and let $B = U(J)$.  Assume Constraints~\eqref{CLPC:add-to-one} to~\eqref{CLPC:more-than-ell-facilities} are satisfied for $B$. Then, we can find a distribution $(\phi_{S,\beta})_{S \subseteq B, \beta \in \R_{\geq 0}^B}$ of pairs $(S, \beta)$,  such that
		\begin{properties}{thm:concentrated-sets}
			\item $s_\phi:= \E_{(S, \beta) \sim \phi}|S| \in [y_B, y_B(1 + 2\ell\pi_J/x_{B,C})]$, and $s_\phi = y_B$ if $y_B > 2\ell$, \label{property:concentrated-S-expectation}
		\end{properties}
		and	for every $(S, \beta)$ in the support of $\phi$, we have
		\begin{properties}[2]{thm:concentrated-sets}
			\item $|S| \in \{\floor{s_\phi}, \ceil{s_\phi}\}$;   \label{property:concentrated-S}
			\item $\beta_i \leq (1+O(1/\ell))u_i$ if $i \in S$ and $\beta_i = 0$ if $i \in B \setminus S$; \label{property:concentrated-beta}
			\item $\sum_{i \in S} \beta_i = x_{B, C} = \sum_{v \in J}\alpha_v$. \label{property:concentrated-enough-demand}
		\end{properties}
		Moreover, the distribution $\phi$ satisfies
		\begin{properties}[5]{thm:concentrated-sets}
			\item the support of $\phi$ has size at most $n^{O(\ell)}$; \label{property:concentrated-support-small}
			\item $\E_{(S, \beta) \sim \phi}\EMD_J(\alpha|_J, \beta) \leq O(\ell^4) D_B$. \label{property:concentrated-cost}
		\end{properties}
	\end{theorem}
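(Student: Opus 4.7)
The plan is to read a distribution $\phi$ off the configuration-LP variables $z^B$ on $B=U(J)$, and then apply two corrections: a conditioning step on a ``good'' event to secure the capacity-violation bound in Property~\ref{property:concentrated-beta}, and a pairwise swap step to enforce that every realized $|S|$ lies in $\{\lfloor s_\phi\rfloor,\lceil s_\phi\rceil\}$.

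For each $S\in\calS$ with $z^B_S>0$ I place mass $z^B_S$ on the configuration whose open set is $S$, with candidate supplies $\widehat\beta^S_i:=(1/z^B_S)\sum_{j\in C}z^B_{S,i,j}$; Constraints~\eqref{CLPC:capacity} and~\eqref{CLPC:i-irrelevant} give $\widehat\beta^S_i\leq u_i$. The atom on $\bot$ (which can be nonzero only when $y_B>2\ell$, the singleton-root case per Property~\ref{property:black-root-is-big}) I expand into concrete supersets of $B$ using Constraint~\eqref{CLPC:more-than-ell-facilities}, sampling each $i\in B$ with probability $z^B_{\bot,i}/z^B_\bot$; in that case we target $s_\phi=y_B$ exactly. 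Let $T_S:=\sum_i\widehat\beta^S_i$ and set $\beta^S:=(x_{B,C}/T_S)\widehat\beta^S$, which yields Property~\ref{property:concentrated-enough-demand} and leaves $\beta^S_i\leq(x_{B,C}/T_S)u_i$. Writing $Y_j(S):=\sum_{i\in S}z^B_{S,i,j}/z^B_S\in[0,1]$, Constraint~\eqref{CLPC:add-to-x} gives $\E_{S\sim z^B}[Y_j]=x_{B,j}$, so the pointwise bound $|Y_j-x_{B,j}|\leq(1-x_{B,j})Y_j+x_{B,j}(1-Y_j)$ yields $\E|Y_j-x_{B,j}|\leq 2x_{B,j}(1-x_{B,j})$, hence $\E|T_S-x_{B,C}|\leq 2\pi_J$. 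Markov's inequality bounds the $z^B$-mass of $S$ with $T_S<(1-1/\ell)x_{B,C}$ by $2\ell\pi_J/x_{B,C}$. Conditioning $\phi$ on the complementary good event gives $x_{B,C}/T_S\leq 1+O(1/\ell)$ and hence Property~\ref{property:concentrated-beta}; the resulting expected size $s_\phi$ lies in the interval of Property~\ref{property:concentrated-S-expectation} because conditioning inflates $\E[|S|]$ by at most the factor $1/(1-2\ell\pi_J/x_{B,C})$. The support size bound (Property~\ref{property:concentrated-support-small}) is immediate from $|\calS|\leq n^{O(\ell_1)}$.

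For Property~\ref{property:concentrated-S} I apply a pairwise swap: while the support contains configurations with $|S|<\lfloor s_\phi\rfloor$ and configurations with $|S|>\lceil s_\phi\rceil$, I pair one of each, transfer a single facility from the large configuration to the small one, and rebalance the $\beta$-vectors to keep $\sum_i\beta^S_i=x_{B,C}$. Each swap preserves the marginal $\sum_Sz^B_S|S|=s_\phi$, and after at most $|\calS|$ swaps every configuration has size in $\{\lfloor s_\phi\rfloor,\lceil s_\phi\rceil\}$. For the baseline EMD of the raw distribution in Property~\ref{property:concentrated-cost}, I would use the transport plan that chains, via each client $j$, the representative $v\in J$ holding demand $x_{i',j}$ for $i'\in U_v$ to the facility $i\in S$ holding supply proportional to $z^B_{S,i,j}/z^B_S$; the triangle inequality $d(v,i)\leq d(i',v)+d(i',j)+d(i,j)\leq 2d(i',j)+d(i,j)+4\dav(j)$ (using Property~\ref{property:facility-to-representative}) then yields an expected EMD of $O(1)D_B$.

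The main obstacle I anticipate is controlling the extra EMD from the rescaling and pairwise swap steps, both of which shuffle supply across $B=U(J)$. The diameter of $U(J)$ is $O(\ell L(J))$ by Property~\ref{property:black-edges-short} together with $|J|=O(\ell)$ (coming from Properties~\ref{property:bundle-large}, \ref{property:black-weights-of-components}, and \ref{property:black-root-is-big}). The mass shuffled by the rescaling is at most $O(\pi_J)$ units (controlled by $\E|T_S-x_{B,C}|\leq 2\pi_J$), and each pairwise swap transfers $O(x_{B,C}/\ell)$ supply while being executed $O(\ell)$ times per bad configuration; the total shuffled mass is therefore $O(\ell^2\pi_J)$. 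Multiplying by the $O(\ell L(J))$ per-unit cost gives an overhead of $O(\ell^3L(J)\pi_J)$, which by Lemma~\ref{lemma:x-j-times-one-minus-x-j-small} is $O(\ell^3)D_B$, comfortably inside the $O(\ell^4)$ allowance of Property~\ref{property:concentrated-cost}.
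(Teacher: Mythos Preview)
Your proposal has the right high-level architecture (read a distribution off $z^B$, condition on a ``good'' event, then concentrate $|S|$ near its mean), but the pairwise-swap step for Property~\ref{property:concentrated-S} has a genuine gap, and there are two secondary issues.

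The main problem is the EMD accounting for the swaps. You assert that the total supply mass shuffled by the swaps is $O(\ell^2\pi_J)$, but the probability mass on configurations with $|S|\notin\{\lfloor s_\phi\rfloor,\lceil s_\phi\rceil\}$ has nothing to do with $\pi_J$: the good event you conditioned on controls only $T_S$, not $|S|$. The probability-weighted number of facilities you must remove from oversized configurations is $\sum_{|S|>\lceil s_\phi\rceil}\phi_{S,\beta}(|S|-\lceil s_\phi\rceil)$, which can be $\Theta(1)$ even when $\pi_J=0$ (the configuration LP may place mass on sets of several different sizes regardless of how concentrated $J$ is). Moreover, when $s_\phi$ is small (say $y_B\approx 1$) each removal redistributes $\Theta(x_{B,C})$ supply, not $O(x_{B,C}/\ell)$. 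The resulting EMD overhead is then of order $x_{B,C}\cdot L(J)$, and for a concentrated $J$ with tiny $\pi_J$ this quantity is \emph{not} bounded by any $\mathrm{poly}(\ell)\cdot D_B$; Lemma~\ref{lemma:x-j-times-one-minus-x-j-small} only controls $\pi_J L(J)$. The paper sidesteps this completely: its Lemma~\ref{lemma:modify-distribution} \emph{discards} configurations with $|S|>\lceil s_\psi\rceil$ via a Markov argument (paying an $O(\ell^2)$ factor on probabilities and hence on expected EMD), and then \emph{pads} configurations with $|S|<\lfloor s_\psi\rfloor$ by adding arbitrary facilities with zero supply. Padding costs nothing in EMD and never touches capacities; your removal-and-redistribution does both.

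Two smaller points. First, your claim that the $\bot$ atom can be nonzero only when $y_B>2\ell$ is false: $z^B_\bot$ can be positive whenever the LP places any mass on $\mu_B\geq\ell_1$, irrespective of $y_B$. The paper kills $\bot$ by \emph{also} conditioning on $\mu_B\leq y_B/(1-1/\ell)<\ell_1$ (Definition~\ref{def:good-pairs}); the intersection of the two good events still has probability at least $1/(2\ell)$, and conditioning on it is what forces $\mu$ to be integral. Second, your swap can compound capacity violations: removing the smallest-$\beta$ facility from a configuration of size $|S'|$ and redistributing proportionally scales the remaining $\beta_i$'s by $|S'|/(|S'|-1)$, and doing this $O(\ell)$ times on the same configuration far exceeds the $1+O(1/\ell)$ budget of Property~\ref{property:concentrated-beta}.
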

	
\ifdefined\proceeding
\else
	To prove the theorem, we first construct a distribution $\psi$ that satisfies most of the properties; then we modify it to obtain the final distribution $\phi$. Notice that a typical black component $J$ has $y_B \leq 2\ell$; however, when $J$ is a root component containing a single representative, $y_B$ might be very large. For now, let us just assume $y_B \leq 2\ell$.  We deal with the case where $|J| = 1$ and $y_B > 2\ell$ at the end of this section.
	
	Since Constraints~\eqref{CLPC:add-to-one} to~\eqref{CLPC:more-than-ell-facilities} are satisfied for $B$, we can use the $z^{B}$ variables satisfying these constraints to construct a distribution $\zeta$ over pairs $(\chi \in [0, 1]^{B \times C}, \mu \in [0, 1]^B)$, where $\mu$ indicates the set of open facilities in $B$ and $\chi$ indicates how the clients in $J$ are connected to facilities in $B$. Let $\calS = \{S \subseteq B: |S| \leq \ell_1\}$ and $\tcalS = S \cup \{\bot\}$ as in Section~\ref{sec:config-LP}. For simplicity, for any $\mu \in [0, 1]^B$, we shall use $\mu_B$ to denote $\sum_{i \in B}\mu_i$. For any $\chi \in [0, 1]^{B \times C}, i \in B$ and $j \in C$, we shall use $\chi_{i, C}$ to denote $\sum_{j \in C}\chi_{i, j}$, $\chi_{B, j}$ to denote $\sum_{i \in B}\chi_{i, j}$, and $\chi_{B, C}$ to denote $\sum_{i \in B} \chi_{i, C} = \sum_{j \in C}\chi_{B, j}$.
	
	The distribution $\zeta$ is defined as follows. Initially, let $\zeta_{\chi, \mu} = 0$ for all $\chi \in [0, 1]^{B \times C}$ and $\mu \in [0, 1]^B$. For each $S \in \tcalS$ such that $z^B_S > 0$, increase $\zeta_{\chi, \mu}$ by $z^B_S$ for the $\chi, \mu$ satisfying $\chi_{i, j} = z^B_{S, i, j}/z^B_S, \mu_{i} = z^B_{S, i}/z^B_S$ for every $i \in B, j \in C$. So, for every pair $(\chi, \mu)$ in the support of $\zeta$, we have $\chi_{i,j} \leq \mu_i, \chi_{i, C} \leq u_i \mu_i$ for every $i \in B, j \in C$.  Moreover, either $\mu$ is integral, or $\mu_B \geq \ell_1$.  Since $\sum_{S \in \tcalS}z^B_S  = 1$, $\zeta$ is a distribution over pairs $(\chi, \mu)$. It is not hard to see that $\E_{(\chi, \mu) \sim \zeta}\chi_{i, j} = x_{i, j}$ for every $i \in B, j \in C$ and $\E_{(\chi, \mu) \sim \zeta}\mu_i = y_i$ for every $i \in B$. The support of $\zeta$ has $n^{O(\ell)}$ size. 
	
	\begin{definition}\label{def:good-pairs}
		We say a pair $(\chi, \mu)$ is good if 
		\begin{properties}{def:good-pairs}
			\item $\mu_B \leq y_B/(1-1/\ell)$;  \label{property:good-pairs-facilities}
			\item $\chi_{B, C} \geq (1-1/\ell)x_{B,C}$. \label{property:good-pairs-connections}
		\end{properties}
	\end{definition}
	
	We are only interested in good pairs in the support of $\zeta$.  We show that the total probability of good pairs in the distribution $\zeta$ is large.  Let $\Xi_\rma$ denote the set of pairs $(\chi, \mu)$ satisfying Property~\ref{property:good-pairs-facilities} and $\Xi_\rmb$ denote the set of pairs $(\chi, \mu)$ satisfying Property~\ref{property:good-pairs-connections}.   Notice that $\E_{(\chi, \mu)\sim \zeta}\mu_B = y_B$. By Markov inequality, we have $\sum_{(\chi, \mu) \in \Xi_\rma}\zeta_{\chi, \mu} \geq 1/\ell$. The proof of the following lemma uses elementary mathematical tools.
	\begin{lemma} \label{lemma:zeta-notin-rmb}
		$\sum_{(\chi, \mu) \notin \Xi_\rmb} \zeta_{\chi, \mu} \leq \ell \pi_J/x_{B, C}$. 
	\end{lemma}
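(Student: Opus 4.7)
The strategy is a Markov-type argument on the ``connection deficit'' random variable $Y := x_{B,C}-\chi_{B,C} = \sum_{j\in C}(x_{B,j}-\chi_{B,j})$ under the distribution $\zeta$. By definition of $\Xi_\rmb$, the event $\{(\chi,\mu)\notin\Xi_\rmb\}$ is exactly $\{Y > x_{B,C}/\ell\}$. Since $Y$ has mean $0$ (because $\E_{(\chi,\mu)\sim\zeta}[\chi_{B,j}] = x_{B,j}$) but can take negative values, I cannot apply Markov to $Y$ directly; instead I will show $\E[\max(Y,0)] \le \pi_J$ and then Markov immediately yields $\Pr[Y > x_{B,C}/\ell] \le \ell \pi_J/x_{B,C}$, which is the claim.

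The key step is a pointwise inequality. For any $(\chi,\mu)$ in the support of $\zeta$ and any $j \in C$, Constraint~\eqref{CLPC:j-connection-bound} gives $\chi_{B,j}\in[0,1]$, and of course $x_{B,j}\in[0,1]$. I claim that under these constraints
\[
\max\bigl(x_{B,j}-\chi_{B,j},\ 0\bigr) \le x_{B,j}\bigl(1-\chi_{B,j}\bigr).
\]
This is verified by a two-case check: if $\chi_{B,j}\ge x_{B,j}$ both sides are already ordered with LHS $=0$; otherwise we need $x_{B,j}-\chi_{B,j}\le x_{B,j}-x_{B,j}\chi_{B,j}$, i.e.\ $\chi_{B,j}(1-x_{B,j})\ge 0$, which is obvious.

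To finish, sum the pointwise inequality over $j$ and combine it with the trivial bound $\max(Y,0)\le \sum_{j\in C}\max(x_{B,j}-\chi_{B,j},0)$. Taking expectations under $\zeta$ and using $\E[\chi_{B,j}]=x_{B,j}$ gives
\[
\E\bigl[\max(Y,0)\bigr] \le \sum_{j\in C} x_{B,j}\bigl(1-\E[\chi_{B,j}]\bigr) = \sum_{j\in C}x_{B,j}(1-x_{B,j}) = \pi_J.
\]
Applying Markov to $\max(Y,0)$ then yields $\sum_{(\chi,\mu)\notin\Xi_\rmb}\zeta_{\chi,\mu} = \Pr[Y > x_{B,C}/\ell] \le \ell\pi_J/x_{B,C}$.

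No step is a serious obstacle; the only piece requiring thought is spotting the pointwise inequality of step two, which is precisely what converts the mean-zero quantity $Y$ into an object whose expectation we can control by $\pi_J$. The inequality crucially uses the per-client bound $\chi_{B,j}\le 1$ provided by Constraint~\eqref{CLPC:j-connection-bound}; without it, we would only have the weak bound $\chi_{B,j}\le |B|$ and the Markov argument would lose a factor of $|B|$, which would be too expensive since $|B|$ can be as large as $y_{U(J)} = \Theta(\ell)$ times constant.
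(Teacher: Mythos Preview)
Your proof is correct and is genuinely simpler than the paper's argument. The paper proceeds by contradiction: assuming the bound fails, it sorts the support of $\zeta$ by $\chi_{B,C}$, defines a ``quantile function'' $g_t=\chi_{B,C}$ at quantile $t$, proves the rearrangement-type inequality $x_{B,j}(1-x_{B,j})\ge\int_0^1 g_{t,j}(1-2t)\,\sfd t$ for each $j$, sums over $j$ to obtain $\pi_J\ge\int_0^1 g_t(1-2t)\,\sfd t$, and then solves an explicit optimization problem (minimizing the right-hand side subject to the assumed mass below $(1-1/\ell)x_{B,C}$) to reach $\pi_J>\pi_J$.

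Your route bypasses all of this with the single pointwise inequality $\max(a-b,0)\le a(1-b)$ for $a,b\in[0,1]$, applied with $a=x_{B,j}$ and $b=\chi_{B,j}$; summing over $j$, taking expectations, and invoking Markov on the nonnegative variable $\max(Y,0)$ finishes in three lines. Both proofs ultimately exploit the same two ingredients---$\E_\zeta[\chi_{B,j}]=x_{B,j}$ and the per-client bound $\chi_{B,j}\le 1$ from Constraint~\eqref{CLPC:j-connection-bound}---but your pointwise inequality isolates the essential mechanism far more cleanly than the paper's quantile-and-optimization detour. The paper's argument does not appear to give anything extra here; your proof is simply a better one for this lemma.
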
	

	\begin{proof}
		The idea is to use the property that $J$ is concentrated. To get some intuition, consider the case where $\pi_J = 0$. For every $j \in C$, either $x_{B, j} = 0$ or $x_{B, j} = 1$. Thus, all pairs $(\chi, \mu)$ in the support of $\zeta$ have $\chi_{B, j} = x_{B, j}$ for every $j \in C$; thus $\chi_{B, C} = x_{B, C}$.
		
		Assume towards contradiction that $\sum_{(\chi, \mu) \notin \Xi_\rmb} \zeta_{\chi, \mu} > \ell \pi_J/x_{B, C}$. We sort all pairs $(\chi, \mu)$ in the support of $\zeta$ according to descending order of $\chi_{B, C}$. For any $t \in [0, 1)$, and $j \in C$, define $g_{t, j} \in [0,1]$ as follows. Take the first pair $(\chi, \mu)$ in the ordering such that the total $\zeta$ value of the pairs $(\chi', \mu')$ before $(\chi, \mu)$ in the ordering plus $\zeta_{\chi, \mu}$ is greater than $t$.  Then, define $g_{t, j} = \chi_{B, j}$ and define $g_t = \sum_{j \in C}g_{t, j}  = \chi_{B, C}$.
		
		Fix a client $j \in C$, we have
		\begin{align*}
			x_{B,j} (1-x_{B,j}) = \int_{0}^{x_{B,j}}(1-2t)\sfd t = \int_{0}^{1}\mathbf{1}_{t < x_{B, j}}(1-2t)\sfd t   \geq \int_0^1g_{t, j}(1-2t)\sfd t,
		\end{align*}
		where $\mathbf{1}_{t < x_{B, j}}$ is the indicator variable for the event that $t < x_{B, j}$. The inequality comes from the fact that $\int_0^1 \mathbf{1}_{t < x_{B, j}} \sfd t = x_{B, j} = \int_0^1g_{t, j}\sfd t$, $g_{t, j} \in [0, 1]$ for every $t \in [0, 1)$, and $1-2t$ is a decreasing function of $t$. 
		
		Summing up the inequality over all $j \in C$, we have $\pi_J \geq \int_0^1g_t(1-2t)\sfd t$.  By our assumption that $\sum_{(\chi, \mu) \notin \Xi_\rmb} \zeta_{\chi, \mu} > \ell \pi_J/x_{B, C}$, there exists a number $t^* < 1 - \ell \pi_J/x_{B, C}$ such that $g_t \leq (1-1/\ell)x_{B, C}$ for every $t \in [t^*, 1)$. As $g_t$ is a non-increasing function of $g$ and $\int_0^1 g_t \sfd t = x_{B, C}$, it is not hard to see that $\int_0^1g_t(1-2t)\sfd t$ is minimized when $g_t = (1-1/\ell)x_{B, C}$ for every $t \in [t^*, 1)$ and $g_t = \frac{x_{B,C} - (1-1/\ell)x_{B,C} (1-t^*)}{t^*} = \frac{1/\ell + t^* - t^*/\ell}{t^*}x_{B, C}$ for every $t \in [0,t^*)$.  We have 
		\begin{align*}
			\pi_J &\geq \left(\int_0^{t^*}\frac{1/\ell + t^* - t^*/\ell}{t^*}(1-2t)\sfd t + \int_{t^*}^1(1-1/\ell)(1-2t)\sfd t \right)x_{B, C} \\
			&= \left(\frac{1/\ell + t^* - t^*/\ell}{t^*} \left(t^* - (t^*)^2\right) - (1-1/\ell)\left(t^* - (t^*)^2\right)\right)x_{B, C}\\
			&= \frac{1}{\ell t^*}\left(t^* - (t^*)^2\right)x_{B, C} = \frac{1-t^*}{\ell}x_{B, C} > \frac{\ell\pi_J/x_{B,C}}{\ell}x_{B, C} = \pi_J,
		\end{align*}
		leading to a contradiction.  Thus, we have that $\sum_{(\chi, \mu) \notin \Xi_\rmb} \zeta_{\chi, \mu} \leq \ell\pi_J/x_{B, C}$.   This finishes the proof of Lemma~\ref{lemma:zeta-notin-rmb}.
	\end{proof}
	
	Overall, we have $Q:= \sum_{(\chi, \mu)\text{ good}}\zeta_{\chi, \mu} = \sum_{(\chi, \mu) \in \Xi_\rma \cap \Xi_\rmb}\zeta_{\chi, \mu} \geq 1/\ell - \ell\pi_J/x_{B,C} \geq 1/\ell - 1/(2\ell) = 1/(2\ell)$, 
	where the second inequality used the fact that $\pi_J \leq x_{B,C}/(2\ell^2)$ for $J \in \calJ^\sfC$. 
	
	Now focus on each good pair $(\chi, \mu)$ in the support of $\zeta$. Since $J \in \calJ^\sfC$ and $(\chi, \mu) \in \Xi_\rma$, we have $\mu_B \leq y_B/(1-1/\ell) \leq 2\ell/(1-1/\ell) < \ell_1$ (since we assumed $y_B \leq 2\ell$), if $\ell_1$ is large enough. So, $\mu \in \{0, 1\}^B$.  Then, let $S = \{i\in B: \mu_i = 1 \}$ be the set indicated by $\mu$, and $\beta_i = \chi_{i, C}/(1-1/\ell)$ for every $i \in B$.  For this $(S, \beta)$, Property~\ref{property:concentrated-beta} is satisfied, and we have $\sum_{i \in B}\beta_i = \chi_{B, C}/(1-1/\ell)\geq x_{B,C}$. We then set ${\psi}_{S, \beta}  = \zeta_{\chi, \mu}/Q$.  Thus, ${\psi}$ indeed forms a distribution over pairs $(S, \beta)$.  Moreover, the support of $\zeta$ has size $n^{O(\ell)}$, so does the support of $\psi$. Thus Property~\ref{property:concentrated-support-small} holds. 
	
	Let $s_\psi := \E_{(S, \beta) \sim {\psi}}|S| = \E_{(\chi, \mu)\sim \zeta}\big[\mu_B\big|(\chi, \mu)\text{ good}\big]$. Notice that $\E_{(\chi, \mu) \sim \zeta} \mu_B  = y_B$. By Lemma~\ref{lemma:zeta-notin-rmb}, we have $\sum_{(\chi, \mu) \notin \Xi_\rmb}\zeta_{\chi, \mu} \leq \ell \pi_J/x_{B, C}$.  Thus, $\E_{(\chi, \mu) \sim \zeta}\big[\mu_B\big|(\chi, \mu) \in \Xi_{\rmb}\big] \leq y_B/(1-\ell\pi_j/x_{B, C})$. Since the condition $(\chi, \mu) \in \Xi_\rma$ requires $\mu_B$ to be upper bounded by some threshold, $\E_{(\chi, \mu) \sim \zeta}\big[\mu_B\big|(\chi, \mu) \in \Xi_{\rmb} \cap \Xi_{\rma}\big]$ can only be smaller. Thus, we have that $s_\psi \leq y_B/(1-\ell\pi_J/x_{B,C}) \leq y_B(1+2\ell\pi_J/x_{B,C})$.
	
	The proof of Property~\ref{property:concentrated-cost} for $\psi$ is long and tedious.
	For simplicity, we use $\hat\E[\cdot]$ to denote $\E_{(\chi, \mu) \sim \zeta}\big[\cdot\big|(\chi,\mu)\text{ good}\big]$, and $a = 1/(1-1/\ell)$ to denote the scaling factor we used to define $\beta$. Indeed, we shall lose a factor $O(\ell^2)$ later and thus we shall prove Property~\ref{property:concentrated-cost} for $\psi$ with the $O(\ell^2)$ term on the right:
	\begin{lemma}\label{lemma:concentrated-cost}
		$\hat\E[\EMD_J(\alpha|_J, \beta)] \leq O(\ell^2)D_B$, where $\beta$ depends on $\chi$ as follows: $\beta_i = a\chi_{i, C}$ for every $i \in B$. 
	\end{lemma}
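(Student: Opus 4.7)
The plan is to apply the triangle inequality for earth-mover distance,
\[
\EMD_J(\alpha|_J, \beta) \leq \EMD_J(\alpha|_J, \hat\alpha) + \EMD_B(\hat\alpha, \beta),
\]
where $\hat\alpha \in \R_{\geq 0}^B$ is the ``reference'' supply $\hat\alpha_i := x_{i,C}$. The first term is deterministic and is bounded by $\sum_{v\in J, i \in U_v} x_{i,C}\, d(v,i) = O(1)\, D_B$ via Lemma~\ref{lemma:moving-to-representatives}, so it suffices to show $\hat\E[\EMD_B(\hat\alpha,\beta)] \leq O(\ell^2)\, D_B$. Since the conditioning event has mass $Q \geq 1/(2\ell)$, $\hat\E[Y] \leq 2\ell\,\E_\zeta[Y]$ for any nonnegative $Y$, so it is enough to upper-bound $\EMD_B(\hat\alpha,\beta)$ on good pairs by a nonnegative quantity $Y$, defined for every $(\chi,\mu)$ in the support of $\zeta$, with $\E_\zeta[Y] \leq O(\ell)\, D_B$.

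I construct such a $Y$ by designing a flow $g$ that routes $\hat\alpha$ into $\beta$ via clients. For each $j \in C$, the \emph{main} part of $g$ sends $t_j := \min(x_{B,j}, a\chi_{B,j})$ units through $j$, drawing from source $i$ in proportion $x_{i,j}/x_{B,j}$ and delivering to sink $i'$ in proportion $\chi_{i',j}/\chi_{B,j}$. Using the bi-step bound $d(i,i') \leq d(i,j) + d(j,i')$ together with $t_j/x_{B,j}\leq 1$ and $t_j/\chi_{B,j} \leq a$, the main cost is at most $\sum_{i,j} x_{i,j}\, d(i,j) + a\sum_{i',j}\chi_{i',j}\, d(j,i')$; since $\E_\zeta[\chi_{i',j}]=x_{i',j}$, its unconditioned expectation is $(1+a)\sum_{i,j} x_{i,j}\, d(i,j) = O(1)\, D_B$. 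The \emph{residual} part routes the leftover $\hat\alpha$ (with total mass $\sum_j (x_{B,j}-a\chi_{B,j})^+$) in three hops: $i \to v(i) \to v(i') \to i'$, where $v(i) \in J$ denotes the representative with $i \in U_{v(i)}$. The $i \to v(i)$ hop deterministically costs at most $\sum_i x_{i,C}\, d(i,v(i)) = O(1)\, D_B$ by Lemma~\ref{lemma:moving-to-representatives}; the $v(i) \to v(i')$ hop is along the spanning tree from Property~\ref{property:black-edges-short} and costs at most $|J|\cdot L(J) \leq 4\ell L(J)$ per unit, since $|J|\leq 4\ell$ by Properties~\ref{property:bundle-large}--\ref{property:black-root-is-big}; and for good pairs the $v(i')\to i'$ hop contributes at most $\sum_{i'}\beta_{i'}\, d(v(i'),i') = a\sum_{i'}\chi_{i',C}\, d(v(i'),i')$ (the amount routed to $i'$ being at most $\beta_{i'}$), whose unconditioned expectation is $O(1)\, D_B$ by Lemma~\ref{lemma:moving-to-representatives}. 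Taking $Y$ to be the sum of these four nonnegative terms, $Y$ dominates $\EMD_B(\hat\alpha,\beta)$ on good pairs and is well-defined everywhere.

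The main obstacle is bounding $\E_\zeta\bigl[\sum_j (x_{B,j}-a\chi_{B,j})^+\bigr]$, which controls the intra-$J$ hop cost. The key one-sided moment bound is: for any $[0,1]$-valued random variable $\chi_{B,j}$ with $\E\chi_{B,j}=x_{B,j}$, the expectation $\E[(x_{B,j}-\chi_{B,j})^+]$ is maximized by the Bernoulli distribution on $\{0,1\}$, yielding $\E[(x_{B,j}-\chi_{B,j})^+] \leq x_{B,j}(1-x_{B,j})$; since $a \geq 1$, the same bound applies to $\E[(x_{B,j}-a\chi_{B,j})^+]$. Summing over $j$, the expected total residual is at most $\pi_J$, so the intra-$J$ hop contributes $4\ell L(J)\, \pi_J = O(\ell)\, D_B$ by Lemma~\ref{lemma:x-j-times-one-minus-x-j-small}. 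Combined with the three $O(1)\, D_B$ bounds, $\E_\zeta[Y] = O(\ell)\, D_B$, hence $\hat\E[\EMD_B(\hat\alpha,\beta)] \leq 2\ell\cdot\E_\zeta[Y] = O(\ell^2)\, D_B$. Together with the deterministic $O(1)\, D_B$ bound on the first EMD, this proves $\hat\E[\EMD_J(\alpha|_J,\beta)] \leq O(\ell^2)\, D_B$.
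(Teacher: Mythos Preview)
Your argument is correct and follows essentially the same two-step strategy as the paper: first match supply and demand ``through clients'' (your main flow is exactly the paper's color-$j$ greedy match), then route the leftover through representatives in $J$ at cost $\mathrm{diam}(J)\cdot(\text{residual mass})$ and invoke Lemma~\ref{lemma:x-j-times-one-minus-x-j-small}. Your EMD triangle-inequality framing via $\hat\alpha_i=x_{i,C}$ is just a repackaging of this same plan. The one genuine simplification is your bound $\E_\zeta[(x_{B,j}-\chi_{B,j})^+]\le x_{B,j}(1-x_{B,j})$ via convexity (equivalently, the Bernoulli extremal), which replaces the paper's two-case estimate $\hat\E[(x_{B,j}-\chi_{B,j})^+]\le \min\{x_{B,j},2\ell(1-x_{B,j})\}\le 3\ell\,x_{B,j}(1-x_{B,j})$; both routes land at the same $O(\ell^2)D_B$ after accounting for the $1/Q\le 2\ell$ conditioning factor.
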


	\begin{proof}
		Focus on a good pair $(\chi, \mu)$ and the $\beta$ it defined: $\beta_i = a\chi_{i, C}$ for every $i \in B$. We call $\alpha$ the demand vector and $\beta$ the supply vector.   Since $(\chi, \mu)$ is good, $\sum_{i \in B} \beta_i = a\chi_{B, C} \geq x_{B, C}  = \sum_{v \in J} \alpha_v$. Thus we can satisfy all the demands and $\EMD(\alpha, \beta)$ is not $\infty$. 
		
		We satisfy the demands in two steps.  In the first step, we give colors to the supplies and demands; each color is correspondent to a client $j \in C$.  Notice that $\alpha_v = \sum_{j \in C}x_{U_v,j}$ and $\beta_i = a\sum_{j \in C}\chi_{i, j}$.  For every $v \in J, j \in C$, $x_{U_v, j}$ units of demand at $v$ has color $j$; for every $i \in B, j \in C$, $a\chi_{i,j}$ units of supply at $i$ have color $j$. In this step, we match the supply and demand using the following greedy rule: while for some $j \in C, i, i' \in B$, there is unmatched demand of color $j$ at $v$ and there is unmatched supply of color $j$ at $i$, we match them as much as possible.  The cost for this step is at most the total cost of moving all supplies and demands of color $j$ to $j$, i.e, 
		\begin{align*}
			&\quad \sum_{v \in J, i \in U_v, j \in C}x_{i, j}(d(v, i)  + d(i, j)) + a\sum_{i \in B, j \in C}\chi_{i,j}d(i, j) \\
			&\leq \sum_{v \in J, i \in U_v}x_{i, C}d(v, i) + \sum_{i \in B, j \in C}(x_{i, j} + a\chi_{i, j})d(i, j) \\
			&\leq O(1)D_B+ \sum_{i \in B, j \in C}(x_{i, j} + a\chi_{i, j})d(i, j), \qquad \qquad \text{by Lemma~\ref{lemma:moving-to-representatives}. }
		\end{align*}
		
		After this step, we have $\sum_{j \in C}\max\{x_{B, j} - a\chi_{B, j}, 0\} \leq \sum_{j \in C}\max\{x_{B, j} - \chi_{B, j}, 0\}$ units of unmatched demand. 
		
		In the second step,  we match remaining demand and the supply.  For every $v \in J, i \in U_v$, we move the remaining supply at $i$ to $v$.  After this step, all the supplies and the demands are at $J$; then we match them arbitrarily.  The total cost is at most 
		\begin{align}
			\sum_{v \in J, i \in U_v}a\chi_{i, C} d(i, v) + \sum_{j \in C}\max\{x_{B, j} - \chi_{B, j}, 0\} \times \diam(J), \label{quan:cost-second-step}
		\end{align}
		where $\diam(J)$ is the diameter of $J$. 
		
		Notice that $\hat\E[\chi_{i, j}] \leq 2\ell x_{i, j}$ since $\Pr_{(\chi, \mu)\sim \zeta}[(\chi, \mu)\text{ good}] \geq 1/(2\ell$)  and $\E_{(\chi, \mu)\sim \zeta}\chi_{i, j} = x_{i, j}$.  The expected cost of the first step is at most $O(1)D_B + O(\ell)\sum_{j \in C, i \in B}x_{i, j}d(i,j) = O(\ell)D_B$.  Similarly, the expected value of the first term of \eqref{quan:cost-second-step} is at most $O(\ell)\sum_{v \in J, i \in U_v}x_{i,C}d(i,v) \leq O(\ell)D_B$ by Lemma~\ref{lemma:moving-to-representatives}.
		
		Consider the second term of \eqref{quan:cost-second-step}. Notice that $\hat\E[\max\{x_{B, j} - \chi_{B, j}, 0\}] \leq x_{B, j}$.  Also, 
		\begin{align*}
			\hat\E[\max\{x_{B, j} - \chi_{B, j}, 0\}] &=  \hat\E[\max\{ (1 - \chi_{B, j}) - (1 - x_{B, j}), 0\}]  \\
			&\leq \hat\E[1 - \chi_{B, j}] \leq 2\ell(1-x_{B, j}).
		\end{align*}
		
		So, $\hat\E\max\{x_{B, j} - \chi_{B, j}, 0\} \leq \min\{x_{B, j}, 2\ell(1-x_{B, j})\} \leq 3\ell x_{B,j}(1-x_{B,j})$: if $x_{B, j} \geq 1-1/(2\ell) \geq 2/3$, then we have $2\ell(1-x_{B, j}) \leq 2\ell(1-x_{B, j})\cdot(3x_{B, j}/2) = 3\ell x_{B, j}(1-x_{B,j})$; if $x_{B, j} < 1-1/(2\ell)$, then $1 - x_{B, j} > 1/(2\ell)$, implying $x_{B, j} \leq 2\ell x_{B, j}(1-x_{B, j})$.
		
		Summing up the inequality over all clients $j \in C$, we have $\hat\E\big[\sum_{j \in C}\max\{x_{B, C}-\chi_{B,C},0 \}\big] \leq O(\ell)\pi_J$. So, the expected value of the second term of \eqref{quan:cost-second-step} is at most $O(\ell)\pi_J \cdot \diam(J) \leq O(\ell^2) \pi_J L(J) \leq O(\ell^2)D_B$, by Lemma~\ref{lemma:x-j-times-one-minus-x-j-small}.  This finishes the proof of Lemma~\ref{lemma:concentrated-cost}. 
	\end{proof}
	
	At this point, we may have $s_\psi < y_B$.  We can apply the following operation repeatedly. Take a pair $(S, \beta)$ with $\psi_{S, \beta} > 0$ and $S \subsetneq B$.  We then shift some $\psi$-mass from the pair $(S, \beta)$ to $(B, \beta)$ so as to increase $s_\psi$. Thus, we can assume Property~\ref{property:concentrated-S-expectation} holds for $\psi$. 
	
	Property~\ref{property:concentrated-enough-demand} may be unsatisfied: we only have $\sum_{i \in B}\beta_i \geq x_{B, C}$ for every $(S, \beta)$ in the support of $\psi$. To satisfy the property, we focus on each $(S, \beta)$ in the support of $\psi$ such that $\sum_{i \in B}\beta_i > x_{B, C}$. By considering the matching that achieves $\EMD(\alpha|_J, \beta)$, we can find a $\beta' \in \R_{\geq 0}^B$ such that $\beta'_i \leq \beta_i$ for every $i \in B$, $\sum_{i \in B} \beta'_i = \sum_{v \in J}\alpha_v = x_{B,C}$, and $\EMD(\alpha|J, \beta') = \EMD(\alpha|J, \beta)$.  We then shift all the $\psi$-mass at $(S, \beta)$ to $(S, \beta')$.
	
	To sum up what we have so far, we have a distribution $\psi$ over $(S, \beta)$ pairs, that satisfies Properties~\ref{property:concentrated-S-expectation}, \ref{property:concentrated-beta}, \ref{property:concentrated-enough-demand}, \ref{property:concentrated-support-small} and Property~\ref{property:concentrated-cost} with $O(\ell^4)$ replaced with $O(\ell^2)$.  The only Property that is missing is Property~\ref{property:concentrated-S}; to satisfy the property, we shall apply the following lemma to massage the distribution $\psi$.
	

	
	\begin{lemma}\label{lemma:modify-distribution}
		Given a distribution $\psi$ over pairs $(S \subseteq B, \beta \in [0, 1]^B)$ satisfying $s_\psi := \E_{(S, \beta)\sim \psi}|S| \leq \ell_1$, 
		we can construct another distribution $\psi'$ such that 
		\begin{properties}{lemma:modify-distribution}
			\item $\psi'_{S, \beta} \leq O(\ell^2)\psi_{S, \beta}$ for every pair $(S, \beta)$; \label{property:massage-keep-marginal}
			\item every pair $(S, \beta)$ in the support of $\psi'$ has $|S| \leq \ceil{s_\psi}$; \label{property:massage-upper-threshold}
			\item $\E_{(S, \beta) \sim \psi'} \max\{|S|, \floor{s_\psi}\} \leq s_\psi$. \label{property:massage-lower-threshold}
		\end{properties}
	\end{lemma}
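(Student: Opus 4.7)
The plan is to reduce the lemma to a one-dimensional question about the marginal distribution of $|S|$ under $\psi$, and then settle that question by a direct construction and a short case analysis. Let $P_t := \Pr_{(S,\beta)\sim\psi}[|S| = t]$. Since properties (a)--(c) depend on $(S,\beta)$ only through $|S|$, it suffices to produce a probability distribution $\{P'_t\}$ on nonnegative integers satisfying (i) $P'_t \leq O(\ell^2)\,P_t$ for every $t$, (ii) $P'_t = 0$ for $t > \lceil s_\psi \rceil$, and (iii) $\sum_t P'_t \max\{t,\lfloor s_\psi\rfloor\} \leq s_\psi$; given such $\{P'_t\}$, I will set $\psi'_{S,\beta} := (P'_{|S|}/P_{|S|})\,\psi_{S,\beta}$, a per-slice rescaling of $\psi$. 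Under (ii), (iii) becomes the scalar inequality $P'_{\lceil s_\psi \rceil} \leq s_\psi - \lfloor s_\psi \rfloor$ when $s_\psi$ is not an integer, and is automatic when $s_\psi$ is an integer.

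Write $k := \lfloor s_\psi \rfloor$ and $\delta := s_\psi - k \in [0,1)$. The construction I propose is: $P'_t := 0$ for $t > k+1$; $P'_{k+1} := \min\{P_{k+1},\delta\}$; and $P'_t := c\cdot P_t$ for $t \leq k$, where $c := (1-P'_{k+1})/\sum_{t \leq k} P_t$ is chosen so that $\{P'_t\}$ sums to $1$. Requirements (ii) and (iii) hold by construction, and the scaling factor on the $(k{+}1)$-slice is $\min\{P_{k+1},\delta\}/P_{k+1} \leq 1$, so everything reduces to bounding $c$.

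The main technical step, and where I expect the only real calculation, is showing $c \leq O(\ell^2)$. Let $A := \sum_{t \leq k} P_t$, $B := P_{k+1}$, and $L := \sum_{t \geq k+2} P_t$, so $A+B+L=1$. Lower-bounding $t$ in each bucket of the first-moment identity $\sum_t t P_t = k+\delta$ gives $(k+1)B + (k+2)L \leq k+\delta$, which, using $A+B+L=1$, rearranges to the key inequality $(k+1)A \geq 1 - \delta + L$. A brief split on whether $B \leq \delta$ (so $c = (1-B)/A = 1 + L/A$) or $B > \delta$ (so $c = (1-\delta)/A$) then yields $c \leq k+2 \leq \lceil s_\psi\rceil + 1 \leq \ell_1 + 1 = O(\ell)$ in both cases, comfortably within the claimed $O(\ell^2)$. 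The intuition is that the first-moment constraint forces at least an $\Omega(1/\lceil s_\psi\rceil)$ fraction of the mass onto sizes $\leq k$, which is exactly enough to absorb the mass we must remove from sizes $> k+1$.
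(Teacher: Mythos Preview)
Your argument is correct and in fact slightly sharper than the paper's: you obtain a scaling factor $c \leq \ell_1 + O(1) = O(\ell)$ rather than the paper's $O(\ell^2)$. The approaches differ in structure. The paper splits on whether the fractional part $\delta = s_\psi - \lfloor s_\psi\rfloor$ is at most $1-1/\ell$; when it is, the paper simply conditions on $|S| \leq \lfloor s_\psi\rfloor$ and uses Markov to lower-bound the surviving mass by $\Omega(1/(\ell\ell_1))$; when $\delta > 1-1/\ell$, it first conditions on $|S| \leq \lceil s_\psi\rceil$ (losing a Markov factor $O(\ell_1)$) and then downweights the $\lceil s_\psi\rceil$-slice by a further $O(\ell_1)$ so that Property~(c) holds. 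Your construction sidesteps the case split on $\delta$ by directly setting $P'_{k+1} = \min\{P_{k+1},\delta\}$, which is exactly the maximal amount of $(k{+}1)$-mass compatible with Property~(c), and then renormalizing the $\leq k$ slices; the first-moment inequality $(k+1)A \geq 1-\delta + L$ replaces the two separate Markov steps and yields the tighter $O(\ell)$ bound.

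One cosmetic point: the chain $k+2 \leq \lceil s_\psi\rceil + 1$ fails when $s_\psi$ is an integer (then $k = \lceil s_\psi\rceil$), but in that case $\delta = 0$ forces $P'_{k+1}=0$ and your own analysis already gives $c \leq k+1 = s_\psi + 1 \leq \ell_1 + 1$, so the conclusion stands.
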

	
	Property~\ref{property:massage-keep-marginal} requires that the probability that a pair $(S, \beta)$ happens in $\psi'$ can not be too large compared to the probability it happens in $\psi$. 
	Property~\ref{property:massage-upper-threshold} requires $|S| \leq \ceil{s_\psi}$ for every $(S, \beta)$ in the support of $\psi'$. Property~\ref{property:massage-lower-threshold} corresponds to requiring $|S| \geq \floor{s_\psi}$: even if we count the size of $S$ as $\floor{s_\psi}$ if $|S| \leq \floor{s_\psi}$,  the expected size is still going to be at most $s_\psi$.  
	
	\begin{proof}[Proof of Lemma~\ref{lemma:modify-distribution}]
		If $s_\psi - \floor{s_\psi} \leq 1-1/\ell$, then we shall throw away the pairs with $|S| > \floor{s_\psi}$.   More formally, let $Q = \Pr_{(S, \beta)\sim \psi} \big[|S| \leq \floor{s_\psi}\big]$ and we define $\psi'_{S, \beta} = \psi_{S, \beta}/Q$ if $|S| \leq \floor{s_\psi}$ and $\psi'_{S, \beta} = 0$ if $|S| \geq \floor{s_\psi} + 1$.  So, Property~\ref{property:massage-upper-threshold} is satisfied. By Markov inequality, we have that $Q \geq 1- s_\psi/(\floor{s_\psi} + 1) = (\floor{s_\psi} - s_\psi + 1)/(\floor{s_\psi} + 1)\geq (1/\ell)/(\floor{s_\psi}+1) \geq 1/(\ell\ell_1+\ell)$ since $s_\psi \leq \ell_1 = O(\ell)$.  Thus, $\psi'_{S, \beta} \leq O(\ell^2)\psi_{S, \beta}$ for every pair $(S, \beta)$, implying Property~\ref{property:massage-keep-marginal}. 
		Every pair $(S, \beta)$ in the support of $\psi'$ has $|S| \leq \floor{s_\psi}$ and thus Property~\ref{property:massage-lower-threshold} holds.
		
		Now, consider the case where $s_\psi - \floor{s_\psi} > 1-1/\ell$. In this case, $s_\psi$ is a fractional number. Let $\psi''$ be the distribution obtained from $\psi$ by conditioning on pairs $(S, \beta)$ with $|S| \leq \ceil{s_\psi}$. By Markov inequality, we have $\Pr_{(S, \beta)\sim \psi} \big[|S| \leq \ceil{s_\psi}\big] \geq 1-s_\psi/(\ceil{s_\psi} + 1) \geq 1-s_\psi/(s_\psi+1) \geq 1/(\ell_1+1)$ as $s_\psi \leq \ell_1 = O(\ell)$. So,  $\psi''_{S, \beta} \leq O(\ell)\psi_{S, \beta}$ for every pair $(S, \beta)$. 
		Moreover, we have $\E_{(S, \beta) \sim \psi''}|S| \leq s_\psi$ since we conditioned on the event that $|S|$ is upper-bounded by some-threshold; all pairs $(S, \beta)$ in the support of $\psi''$ have $|S| \leq \ceil{s_\psi}$.
		
		Then we modify $\psi''$ to obtain the final distribution $\psi'$. Notice that for a pair $(S, \beta)$ with $|S| \leq \floor{s_\psi}$, we have $s_\psi - |S| \leq s_\psi \leq 2\ell_1(s_\psi-\floor{s_\psi})$. Thus,
		\begin{align*}
			\sum_{(S, \beta): |S| \leq \floor{s_\psi}} \psi''_{S, \beta}(s_\psi - \floor{s_\psi}) &\geq \frac{1}{2\ell_1} \sum_{(S, \beta): |S| \leq \floor{s_\psi}} \psi''_{S, \beta}(s_\psi - |S|)\\
			& \geq \frac{1}{2\ell_1}\sum_{(S, \beta):|S| = \ceil{s_\psi}}\psi''_{S, \beta}(\ceil{s_\psi}-s_\psi),
		\end{align*}
		where the second inequality is due to $\E_{(S, \beta)\sim \psi''}|S| \leq s_\psi$.
		
		For every pair $(S, \beta)$ with $|S| \leq \floor{s_\psi}$, let $\psi'_{S, \beta}  = \psi''_{S, \beta}$. 
		For every pair $(S, \beta)$ such that $|S| = \ceil{s_\psi}$, we define $\psi'_{S, \beta} = \psi''_{S, \beta}/(2\ell_1)$.   Due to the above inequality, we have $\sum_{(S, \beta): |S| \leq \floor{s_\psi}}\psi'_{S, \beta}(s_\psi - \floor{s_\psi}) \geq \sum_{(S, \beta):|S| = \ceil{s_\psi}}\psi'_{S, \beta}(|S| - s_\psi)$, implying $\sum_{(S, \beta)}\psi'_{S, \beta}\max\{|S| -s_\psi, \allowbreak \floor{s_\psi} - s_\psi\} \leq 0$. Finally, we scale the $\psi'$ vector so that we have $\sum_{(S, \beta)}\psi'_{S, \beta} = 1$; Properties~\ref{property:massage-upper-threshold} and \ref{property:massage-lower-threshold} hold. The scaling factor is at most $2\ell_1 = O(\ell)$. Overall, we have $\psi'_{S, \beta} \leq O(\ell^2)\psi_{S, \beta}$ for every pair $(S, \beta)$ and Property~\ref{property:massage-keep-marginal} holds. This finishes the proof of Lemma~\ref{lemma:modify-distribution}.
	\end{proof}	
	
	With Lemma~\ref{lemma:modify-distribution} we can finish the proof of Theorem~\ref{thm:concentrated-sets} for the case $y_B \leq 2\ell$.  We apply the lemma to $\psi$ to obtain the distribution $\psi'$. By Property~\ref{property:massage-keep-marginal}, Properties~\ref{property:concentrated-beta}, \ref{property:concentrated-enough-demand} and \ref{property:concentrated-support-small} remain satisfied for $\psi'$;  Property~\ref{property:concentrated-cost} also holds for $\psi'$, as we lost a factor of $O(\ell^2)$ on the expected cost. 
	
	To obtain our final distribution $\phi$, initially we let $\phi_{S, \beta} = 0$ for every pair $(S, \beta)$.  For every $(S, \beta)$ in the support of $\psi'$, we apply the following procedure.  If $|S| \geq \floor{s_\psi}$, then we increase $\phi_{S, \beta}$ by $\psi'_{S, \beta}$; otherwise, take an arbitrary set $S' \subseteq B$ such that $S \subseteq S'$ and $|S'| = \floor{s_\psi}$ and increase $\phi_{S', \beta}$ by $\psi_{S, \beta}$. Due to Property~\ref{property:massage-upper-threshold}, every pair $(S, \beta)$ in the support of $\phi$ has $|S| \in \{\floor{s_\psi}, \ceil{s_\psi}\}$. 
	Property~\ref{property:massage-lower-threshold} implies that $s_\phi:=\E_{(S, \beta)\sim \phi}|S| \leq s_\psi \in [y_B, (1+2\ell\pi)y_B]$.  If $s_\phi < s_\psi$, we increase $s_\phi$ using the following operation. Take an arbitrary pair $(S, \beta)$ in the support of $\phi$ such that $|S| = \floor{s_\psi}$, let $S' \supseteq S$ be a set such that $S' \subseteq B$ and $|S'| = \ceil{s_\psi}$, we decrease $\phi_{S, \beta}$ and increase $\phi_{S', \beta}$. Eventually, we can guarantee $s_\phi = s_\psi$; thus Properties~\ref{property:concentrated-S-expectation} and \ref{property:concentrated-S} are satisfied. This finishes the proof of Theorem~\ref{thm:concentrated-sets} when $y_B \leq 2\ell$. 

	Now we handle the case where $y_B > 2\ell$. By Properties~\ref{property:black-weights-of-components} and \ref{property:black-root-is-big}, $J$ is a root black component that contains a single representative $v$ and $y_{U_v=B} > 2\ell$. First we find a nearly integral solution with at most $2\ell +2$ open facilities. Then we close two facilities serving the minimum amount of demand and spread their demand among the remaining facilities. Since there is at least $2\ell$ open facilities remaining, we increase the amount of demand at any open facility by no more than a factor of $O(1/\ell)$.   
	
	Let $u'_i = \frac{x_{i,C}}{y_i } \leq u_i$. We may scale $u'_i$ by a factor of $1 + O(1/\ell)$ during the course of the algorithm. Consider the following LP with variables $\{\lambda_i\}_{i \in B}$:
	\begin{align}
		\text{min}\quad \sum_{i \in U_v}u'_i\lambda_i d(i, v) \quad \text{s.t.} 				\label{LP:concentrated-large-y}
	\end{align}
	\vspace*{-0.3\abovedisplayskip}			
	\begin{align*}
		\sum_{i \in B}u'_i\lambda_i = x_{B, C}; \quad 
		\sum_{i \in B}\lambda_i &= y_B; \quad
		\lambda_i \in [0,1],\quad \forall i  \in B. 
	\end{align*}
	
	By setting $\lambda_i = y_i$, we obtain a solution to LP\eqref{LP:concentrated-large-y} of value $\sum_{i \in U_v}x_{i,C}d(i,v) \leq O(1)D_{U_v}$, by Lemma~\ref{lemma:moving-to-representatives}. So, the value of LP\eqref{LP:concentrated-large-y} is at most $O(1)D_{U_v}$. Fix on such an optimum vertex-point solution $\lambda$ of LP\eqref{LP:concentrated-large-y}. Since there are only two non-box-constraints, $\lambda$ has at most two fractional $\lambda_i$.  Moreover, as $y_{U_v} \geq 2\ell$,  there are at least $2\ell$ facilities in the support of $\lambda$. 
	
	We shall reduce the size of the support of $\lambda$ by $2$, by repeating the following procedure twice. Consider the $i^*$ in the support with the smallest $\lambda_{i^*} u'_{i^*}$ value.  Let $a := \lambda_{i^*} u'_{i^*}/ \sum_{i \in U_v}\lambda_{i}u'_{i} \leq O(\frac{1}{\ell})$, we then scale $u'_{i}$ by a factor of $1/(1-a) \leq 1 + O(1/\ell)$ for every $i \in U_v \setminus i^*$ and change $\lambda_{i^*}$ to $0$.  So, we still have $\sum_{i \in U_v}u'_i \lambda_i = x_{U_v, C}$. The value of the objective function is scaled by a factor of at most $1+O(1/\ell)$.
	
	Let $S = \{i \in U_v: \lambda_i > 0\}$ and let $\beta_i = \lambda_iu'_i$ for every $i \in U_v$.  So, $|S| \leq y_{U_v}$ Properties~\ref{property:concentrated-beta} and \ref{property:concentrated-enough-demand} are satisfied.  Moreover, $\EMD_J(\alpha|_J, \beta) \leq \sum_{i \in U_v}\beta_id(i, v) \leq O(1)D_{U_v}$ since the value of LP\eqref{LP:concentrated-large-y} is $O(1)D_{U_v}$ and we have scaled each $\beta_i$ by at most a factor of $1+O(1/\ell)$. 
	
	If we let $\phi$ contains the single pair $(S, \beta)$ with probability $1$, then all properties from \ref{property:concentrated-beta} to \ref{property:concentrated-cost} are satisfied.  To satisfy Properties~\ref{property:concentrated-S-expectation}  and \ref{property:concentrated-S}, we can manually add facilities to $S$ with some probability, as we did before for the case $y_B \leq 2\ell$.   This finishes the proof of Theorem~\ref{thm:concentrated-sets}.		
\fi
\ifdefined\proceeding
\else
	\subsection{Local Solutions for Unions of Non-Concentrated Components}
	\label{subsec:distri-non-con}
		In this section, we construct a local solution for the union $V$ of some close non-concentrated black components.
\fi
		\begin{lemma}\label{lemma:non-concentrated-sets}
			Let $\calJ' \subseteq \calJ^\sfN$ be a set of non-concentrated black components, $V = \bigcup_{J \in \calJ'} J$ and $B = U(V)$.  Assume there exists $v^* \in R$ such that $d(v, v^*) \leq O(\ell^2)L(J)$ for every $J \in \calJ'$ and $v \in J$. Then, we can find a pair $(S \subseteq B, \beta \subseteq \R_{\geq 0}^B)$ such that 
			\begin{properties}{lemma:non-concentrated-sets}
				\item $|S|  \in \big\{\ceil{y_B}, \ceil{y_B} + 1\big\}$; \label{property:non-concentrated-S}
				\item $\beta_i \leq u_i $ if $i \in S$ and $\beta_i = 0$ if $i \in B \setminus S$; \label{property:non-concentrated-beta}
				\item $\sum_{i \in S} \beta_i = x_{B, C} = \sum_{v \in V}\alpha_v$; \label{property:non-concentrated-enough-demand}
				\item $\EMD_V(\alpha|_V, \beta) \leq O(\ell^2\ell_2)D_B$. \label{property:non-concentrated-cost}
			\end{properties}
		\end{lemma}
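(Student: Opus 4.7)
The plan is to reduce the problem to a single-point problem: since the hypothesis lets us collapse $V$ to $v^*$ cheaply, I solve a compact LP over $B$ whose only demand lives at $v^*$, extract a vertex solution (with at most two fractional coordinates), and open its support. Concretely, set $u'_i := x_{i,C}/y_i \le u_i$ for $i \in B$ and consider the LP with variables $\{\lambda_i\}_{i \in B}$:
\[
\min \sum_{i \in B} u'_i \lambda_i\, d(i, v^*) \quad \text{s.t.} \quad \sum_{i \in B} u'_i \lambda_i = x_{B,C}, \quad \sum_{i \in B} \lambda_i = y_B, \quad \lambda_i \in [0,1].
\]
Setting $\lambda_i = y_i$ is feasible, with objective exactly $\sum_{i \in B} x_{i,C}\, d(i, v^*)$.

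To bound this objective, split $d(i, v^*) \le d(i, v) + d(v, v^*)$ for $i \in U_v$ with $v \in V$. Lemma~\ref{lemma:moving-to-representatives} gives $\sum_{v \in V, i \in U_v} x_{i,C}\, d(i, v) \le O(1) D_B$. For the second piece, the hypothesis $d(v, v^*) \le O(\ell^2) L(J)$ for $v \in J$, combined with the non-concentrated inequality $x_{U(J), C} < \ell_2 \pi_J$ and Lemma~\ref{lemma:x-j-times-one-minus-x-j-small}, yields
\[
\sum_{v \in V} x_{U_v, C}\, d(v, v^*) \le O(\ell^2)\!\!\sum_{J \in \calJ'}\! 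L(J)\, x_{U(J), C} \le O(\ell^2 \ell_2)\!\!\sum_{J \in \calJ'}\! D_{U(J)} = O(\ell^2 \ell_2) D_B.
\]
So the LP optimum is at most $O(\ell^2 \ell_2) D_B$. This is the heart of the argument and the step I expect to be the main obstacle: without the non-concentrated hypothesis we could not charge $L(J)\, x_{U(J), C}$ against $D_{U(J)}$, and collapsing $V$ to a single point would be too expensive.

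Now take a vertex solution $\lambda^*$. Beyond the box constraints, only two equalities are present, so at most two coordinates of $\lambda^*$ are fractional. Set $S := \{i \in B : \lambda^*_i > 0\}$, $\beta_i := u'_i \lambda^*_i$ for $i \in S$, and $\beta_i := 0$ otherwise. Properties~\ref{property:non-concentrated-beta} and \ref{property:non-concentrated-enough-demand} are immediate: $\beta_i \le u'_i \le u_i$ and $\sum_i \beta_i = x_{B,C}$. For Property~\ref{property:non-concentrated-S}, let $k$ be the number of coordinates with $\lambda^*_i = 1$ and $f \in \{0,1,2\}$ the number of fractional coordinates, so that $|S| = k + f$ and $k + (\text{sum of fractionals}) = y_B$. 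A small case analysis on $f$ and on whether $y_B$ is integral, using that two fractional values in $(0,1)$ must sum to $1$ whenever $y_B$ is integer and $f = 2$, yields $|S| \in \{\ceil{y_B}, \ceil{y_B}+1\}$ in every case.

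Finally, for Property~\ref{property:non-concentrated-cost}, bound the earth mover distance by a two-stage transport: first move each $\alpha_v$ unit from $v \in V$ to $v^*$ at cost $\sum_{v \in V} \alpha_v\, d(v, v^*) \le O(\ell^2 \ell_2) D_B$ (the calculation above, with $\alpha_v = x_{U_v, C}$), then ship from $v^*$ to each $i \in S$ weight $\beta_i$ at cost $\sum_{i \in S} \beta_i\, d(i, v^*)$, which equals the LP value and hence is also $O(\ell^2 \ell_2) D_B$. Summing the two stages gives the claimed bound and completes the proof.
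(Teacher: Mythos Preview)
Your proof is correct and follows essentially the same approach as the paper: set up a two-constraint LP over $B$ with $\lambda_i = y_i$ as a feasible witness, extract a vertex solution with at most two fractional coordinates, and bound $\EMD_V(\alpha|_V,\beta)$ by routing everything through $v^*$. The only difference is cosmetic: the paper uses the objective $\sum_{J,v\in J,i\in U_v} u'_i\lambda_i\,(d(i,v)+\ell^2 L(J))$, whereas you use $\sum_i u'_i\lambda_i\, d(i,v^*)$ directly; since $d(i,v^*)\le d(i,v)+O(\ell^2)L(J)$ these are interchangeable for the purpose of the bound, and your choice is arguably slightly cleaner.
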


\ifdefined\proceeding
\else
	\begin{proof}
		We shall use an algorithm similar to the one we used for handling the case where $y_B > 2\ell$ in Section~\ref{subsec:distri-con}. Again, for simplicity, we let $u'_i = \frac{x_{i,C}}{y_i } \leq u_i$ to be the ``effective capacity'' of $i$.  Consider the following LP with variables $\{\lambda_i\}_{i \in B}$:
		\begin{align}
			\text{min}\quad \sum_{J \in \calJ', v \in J, i \in U_v}u'_i\lambda_i \big(d(i, v) +\ell^2  L(J)\big) \quad \text{s.t.} 				\label{LP:non-concentrated}
		\end{align}
		\vspace*{-1\abovedisplayskip}			
		\begin{align*}
			\sum_{i \in B}u'_i\lambda_i = x_{B, C}; \quad 
			\sum_{i \in B}\lambda_i &= y_B; \quad
			\lambda_i \in [0,1],\quad \forall i  \in B. 
		\end{align*}
		
		By setting $\lambda_i = y_i$, we obtain a valid solution to the LP with the objective value
		\begin{align}
			&\quad \sum_{J \in \calJ', v \in J, i \in U_v}x_{i,C}\big(d(i,v) + \ell^2 L(J)\big) \nonumber\\
			&\leq \sum_{v \in V, i \in U_v}x_{i,C}d(i,v) + \ell^2 \sum_{J \in \calJ'}x_{U(J), C}L(J) 
			\leq \sum_{v \in V}O(1)D_{U_v} + \ell^2 \sum_{J \in \calJ'}\ell_2\pi_J L(J) \nonumber\\
			&\leq O(1)D_B + \ell^2 \ell_2\sum_{J \in \calJ'} O(1)D_{U(J)}= O(\ell^2 \ell_2)D_B, \label{equ:LP-value-non-c}
		\end{align} 
		by Lemma~\ref{lemma:moving-to-representatives} and Lemma~\ref{lemma:x-j-times-one-minus-x-j-small}. So, the value of LP\eqref{LP:non-concentrated} is at most $O(\ell^2\ell_2)D_B$.
		
		Fix such an optimum vertex-point solution $\lambda$ of LP \eqref{LP:non-concentrated}. Since there are only two non-box-constraints, every vertex-point $\lambda$ of the polytope has at most two fractional $\lambda_i$. 
		
		Let $S = \{i \in B: \lambda_i > 0\}$ and let $\beta_i = \lambda_iu'_i$ for every $i \in B$.  So, Properties~\ref{property:non-concentrated-S}, \ref{property:non-concentrated-beta} and \ref{property:non-concentrated-enough-demand} are satisfied. 
		
		Now we prove Property~\ref{property:non-concentrated-cost}. To compute $\EMD_V(\alpha|_V, \beta)$, we move all demands in $\alpha|_V$ and all supplies in $\beta$ to $v^*$. The cost is
		\begin{align*}
			&\quad \sum_{v \in V}\alpha_v d(v, v^*) + \sum_{i \in B}\beta_i d(i, v^*) \\
			&\leq \sum_{J \in \calJ', v \in J} x_{U_v, C} O(\ell^2) L(J) + \sum_{J \in \calJ', v \in J, i \in U_v}\beta_i \big(d(i,v) + O(\ell^2)L(J)\big)\leq O(\ell^2\ell_2)D_B,
		\end{align*}
		where the $O(\ell^2\ell_2) D_B$ for the first term was proved in \eqref{equ:LP-value-non-c} and the bound $O(\ell^2\ell_2)D_B$ for the second term is due to the fact that $\gamma$ is an optimum solution to LP\eqref{LP:non-concentrated}.
		This finishes the proof of Lemma~\ref{lemma:non-concentrated-sets}.
	\end{proof}
\fi

	\section{Rounding Algorithm}
\label{sec:rounding}

	In this section we describe our rounding algorithm. We start by giving the intuition behind the algorithm. For each concentrated component $J \in \calJ$, we construct a distribution of local solutions using Theorem~\ref{thm:concentrated-sets}. We shall construct a partition $\calV^\sfN$ of the representatives in $\union_{J \in \calJ^\sfN}J$ so that each $V \in \calV^\sfN$ is the union of some nearby components in $\calJ^\sfN$.  For each set $V \in \calV^\sfN$, we apply Lemma~\ref{lemma:non-concentrated-sets} to construct a local solution.  If we independently and randomly choose a local solution from every distribution we constructed, then we can move all the demands to the open facilities at a small cost, by Property~\ref{property:concentrated-cost} and Property~\ref{property:non-concentrated-cost}. 
	
	However, we may open more than $k$ facilities, even in expectation. Noticing that the fractional solution opens $y_B$ facilities in a set $B$, the extra number of facilities come from two places.  In Property~\ref{property:concentrated-S-expectation} of Theorem~\ref{thm:concentrated-sets}, we may open in expectation $y_B\cdot 2\ell\pi_J/x_{B, C}$ more facilities in $B$ than $y_B$. Then in Property~\ref{property:non-concentrated-S} of Lemma~\ref{lemma:non-concentrated-sets}, we may open $\ceil{y_B}$ or $\ceil{y_B}+1$ facilities in $B$. To reduce the number of open facilities to $k$, we shall shut down (or remove) some already-open facilities and move the demands satisfied by these facilities to the survived open facilities: a concentrated component $J \in \calJ^\sfC$ is responsible for removing $y_B\cdot 2\ell\pi_J/x_{B, C} < 1$ facilities in expectation; a set $V \in \calV^\sfN$ is responsible for removing up to $2$ facilities.  Lemma~\ref{lemma:x-j-times-one-minus-x-j-small} allows us to bound the cost of moving demands caused by the removal, provided that the moving distance is not too big.  To respect the capacity constraint up to a factor of $1+\eps$,  we are only allowed to scale the supplies of the survived open facilities by a factor of $1+O(1/\ell)$.  Both requirements will be satisfied by the forest structure over groups and the fact that each non-leaf group contains $\Omega(\ell)$ fractional opening (Property~\ref{property:groups-non-leaf-big}).  Due to the forest structure and Property~\ref{property:groups-non-leaf-big}, we always have enough open facilities locally that can support the removing of facilities.
	
	In order to guarantee that we always open $k$ facilities, we need to use a dependent rounding procedure for opening and removing facilities. As in many of previous algorithms, we incorporate the randomized rounding procedure into random selections of vertex points of polytopes respecting marginal probabilities.  In many cases, a randomized selection procedure can be derandomized since there is an explicit linear objective we shall optimize. 
	
	\begin{figure}
		\centering
		\includegraphics[width=0.8\textwidth]{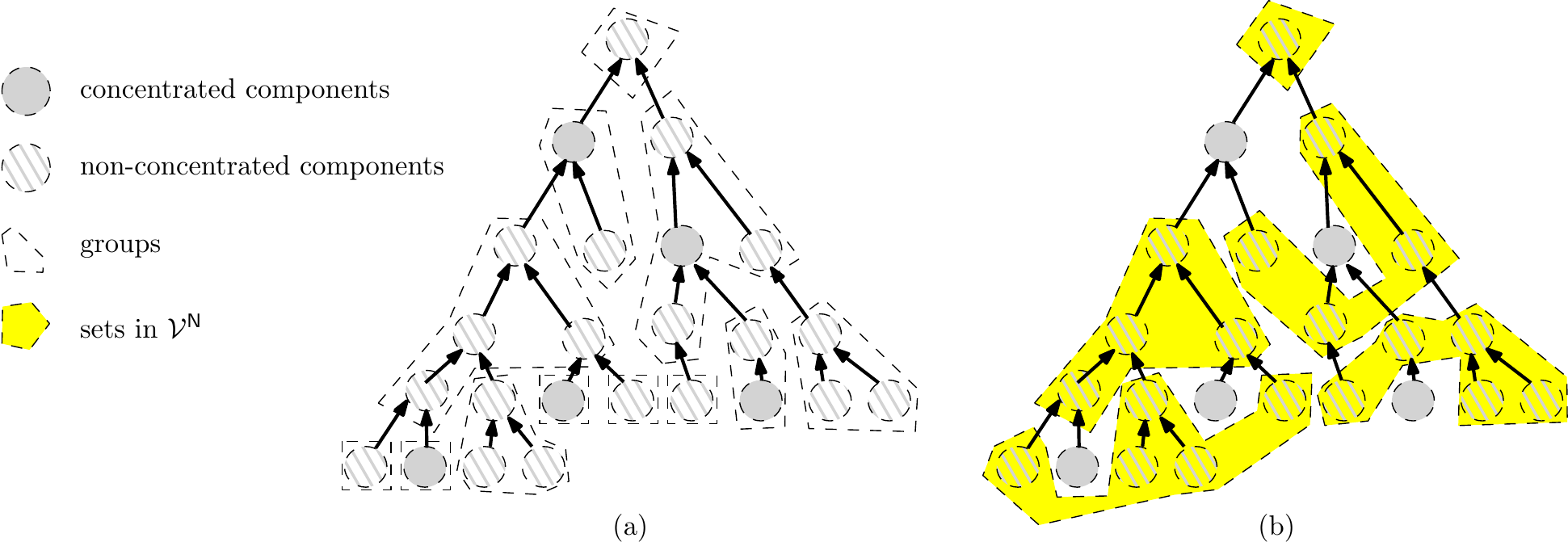}
		\caption{Figure (a) gives the forest $\Upsilon^*_{\calJ}$ over $\calJ$ and the set $\calG$ of groups (denoted by empty polygons). Figure (b) gives $\calV^\sfN$: each set $V \in \calV^\sfN$ is the union of components in a solid polygon.}
		\label{fig:VN}
	\end{figure}
		
	We now formally describe our rounding algorithm.  For every group $G \in \calG$, we use $\Lambda_G$ to denote the set of child-groups of $G$.  We construct a partition $\bbJ^\sfC$ of $\calJ^\sfC$ as follows.  For each root group $G \in \calG$,  we add $G \cap \calJ^\sfC$ to $\bbJ^\sfC$ if it is not empty.  For each non-leaf group $G \in \calG$, we add $\union_{G' \in \Lambda_G} (G' \cap \calJ^\sfC)$ to $\bbJ^\sfC$, if it is not empty.   We construct the partition $\bbJ^\sfN$ for $\calJ^\sfN$ in the same way, except that we consider components in $\calJ^\sfN$.  We also define a set $\calV^\sfN$ as follows: for every $\calJ' \in \bbJ^\sfN$, we add $\union_{J \in \calJ'} J$ to $\calV^\sfN$; thus, $\calV^\sfN$ forms a partition for $\union_{J \in \calJ^\sfN}J$. See Figure~\ref{fig:VN} for the definition of $\calV^\sfN$. 
	
	In Section~\ref{subsec:initial-open-facilities}, we describe the procedure for opening a set $S^*$ of facilities, whose cardinality may be larger than $k$.   Then in Section~\ref{subsec:remove}, we define the procedure $\remove$, which removes one open facility. We wrap up the algorithm in Section~\ref{subsec:final-solution}.
		
	\subsection{Constructing Initial Set $S^*$ of Open Facilities} 
	\label{subsec:initial-open-facilities}
	
	In this section, we open a set $S^*$ of facilities, whose cardinality may be larger than $k$,  and construct a supply vector $\beta^* \in \R_{\geq 0}^F$ such that $\beta^*_i = 0$ if $i \notin S^*$. $(S^*, \beta^*)$ will be the concatenation of all local solutions we constructed.
	
	It is easy to construct local solutions for non-concentrated components.  For each set $\calJ' \in \bbJ^\sfN$ of components and its correspondent $V = \union_{J \in \calJ'}J \in \calV^\sfN$, we apply Lemma~\ref{lemma:non-concentrated-sets} to obtain a local solution $\big(S \subseteq U(V), \beta \in \R_{\geq 0}^{U(V)}\big)$. Then, we add $S$ to $S^*$ and let $\beta^*_i = \beta_i$ for every $i \in U(V)$.  Notice that $\calJ'$ either contains a single root black component $J$, or contains all the non-concentrated black components in the child-groups of some group $G$. In the former case, the diameter of $J$ is at most $O(\ell)L(J)$ by Property~\ref{property:black-edges-short}; in the latter case, we let $v^*$ be an arbitrary representative in $\union_{J' \in G}J'$ and then any representative $v \in J, J \in \calJ'$ has $d(v, v^*) \leq O(\ell^2)L(J)$ by Property~\ref{property:groups-close}.  Thus, all the properties in Lemma~\ref{lemma:non-concentrated-sets} are satisfied.
	
	For concentrated components, we only obtain distributions of local solutions by applying Theorem~\ref{thm:concentrated-sets}. For every $J \in \calJ^\sfC$, we check if Constraints~\eqref{CLPC:add-to-one} to \eqref{CLPC:more-than-ell-facilities} are satisfied for $B = U(J)$. If not, we return a separation plane for the fractional solution; otherwise we apply Theorem~\ref{thm:concentrated-sets} to each component $J$ to obtain a distribution $\big(\phi^J_{S, \beta}\big)_{S \subseteq U(J), \beta \in \R_{\geq 0}^{U(J)}}$. To produce local solutions for concentrated components, we shall use a dependent rounding procedure that respects the marginal probabilities. As mentioned earlier, we shall define a polytope and the procedure randomly selects a vertex point of the polytope. 
	
	We let $s_J := s_{\phi_J} := \E_{(S, \beta) \sim \phi^J}|S|$ be the expectation of $|S|$ according to distribution $\phi^J$.  For notational convenience, we shall use $a \approx b$ to denote $a \in \big[\floor{b}, \ceil{b}\big]$.  Consider the following polytope $\calP$ defined by variables $\{\psi^J_{S, \beta}\}_{J \in \calJ^\sfC, S, \beta}$ and $\{q_J\}_{J \in \calJ^\sfC}$.\footnote{For every $J \in \calJ^\sfC$, we only consider the pairs $(S, \beta)$ in the support of $\phi^J$; thus the total number of variables is $n^{O(\ell)}$.} 
	
\noindent \hspace*{-0.04\textwidth} \begin{minipage}{0.46\textwidth}
	 	\begin{alignat}{2}
	 		\psi^J_{S, \beta}, p_J &\in [0, 1] &\  \forall J &\in \calJ^\sfC, S, \beta; \label{LPC:0-or-1}\\[1.1em]
	 		\sum_{S, \beta}\psi^J_{S,\beta} &= 1, &\  \forall J &\in \calJ^\sfC; \label{LPC:one-config-1}
	 	\end{alignat}
	 \end{minipage}
	 \begin{minipage}{0.53\textwidth}
	 	\begin{alignat}{2}
	 		\sum_{J \in \calJ'} q_J &\leq 1, &\  \forall \calJ' &\in \bbJ^\sfC; \label{LPC:q-partition} \\
	 		\sum_{S, \beta} \psi^J_{S, \beta} |S|  - q_J &\approx y_{U(J)}, &\  \forall J &\in \calJ^\sfC; \label{LPC:each-component}
	 	\end{alignat}
	 \end{minipage}
	
	 \vspace*{-8pt}
	 
	 \begin{alignat}{2}
	 		\sum_{J \in \calJ'}\Big(\sum_{S, \beta} \psi^J_{S, \beta} |S|  - q_J\Big) &\approx \sum_{J \in \calJ'} y_{U(J)}, &\quad \forall \calJ' &\in \bbJ^\sfC; \label{LPC:partition-components} \\
	 		\sum_{J \in \calJ^\sfC}\Big(\sum_{S, \beta} \psi^J_{S, \beta} |S| - q_J\Big) &\approx \sum_{J \in \calJ^\sfC} y_{U(J)}. \label{LPC:all-components}
	 \end{alignat}
	 		 
	In the above LP, $\psi^J$ is the indicator vector for local solutions for $J$ and $q_J$ indicates whether $J$ is responsible for removing one facility; if $q_J = 1$, we shall call $\remove(J)$ later.  Up to changing of variables, any vertex point of $\calP$ is defined by two laminar families of tight constraints and thus $\calP$ is integral: 
	\begin{lemma}\label{lemma:P-integral}
		$\calP$ is integral.
	\end{lemma}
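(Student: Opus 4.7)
The plan is a two-step reduction followed by a totally-unimodular argument. First, I will show that at any vertex point $(\psi^*, q^*)$ of $\calP$, for each $J \in \calJ^\sfC$ the sub-vector $\psi^{*J}$ is supported on at most two pairs: one $(S^-_J, \beta^-_J)$ with $|S^-_J| = \floor{s_J}$ and one $(S^+_J, \beta^+_J)$ with $|S^+_J| = \ceil{s_J}$. By Property~\ref{property:concentrated-S}, every pair in the support of $\phi^J$ already has $|S| \in \{\floor{s_J}, \ceil{s_J}\}$, and the only constraints of $\calP$ involving $\psi^J$ are \eqref{LPC:one-config-1}, \eqref{LPC:each-component}, \eqref{LPC:partition-components}, and \eqref{LPC:all-components}, all of which depend on $\psi^J$ only through the two linear functionals $\sum_{S, \beta}\psi^J_{S, \beta}$ and $\sum_{S, \beta}\psi^J_{S, \beta}|S|$. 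Hence if two pairs of the same $|S|$-value had positive mass in $\psi^{*J}$, one could perturb them oppositely by $\pm\epsilon$ while keeping every constraint value fixed, contradicting the vertex property.

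Next, I replace $\psi^J$ by a single scalar $p_J \in [0, 1]$ representing the mass on a pair of size $\ceil{s_J}$, so that $\sum_{S, \beta}\psi^J_{S, \beta}|S| = \floor{s_J} + p_J(\ceil{s_J} - \floor{s_J})$. For any vertex $(\psi^*, q^*)$ of $\calP$, once I fix representative pairs $(S^\pm_J, \beta^\pm_J)$ from the support of $\psi^{*J}$, the slice of $\calP$ cut out by setting $\psi^J_{S, \beta} = 0$ outside this support is affinely isomorphic to a reduced polytope $\widetilde\calP$ over $\{p_J, q_J\}_{J \in \calJ^\sfC}$ with box constraints $p_J, q_J \in [0, 1]$, the partition constraints $\sum_{J \in \calJ'}q_J \leq 1$ for $\calJ' \in \bbJ^\sfC$, and two-sided integer bounds on $\sum_{J \in \calJ''}\bigl(\floor{s_J} + p_J(\ceil{s_J}-\floor{s_J}) - q_J\bigr)$ for $\calJ''$ ranging over the three-level laminar family $\bigl\{\{J\}: J \in \calJ^\sfC\bigr\} \cup \bbJ^\sfC \cup \{\calJ^\sfC\}$. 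Since $(\psi^*, q^*)$ is a vertex of this slice, integrality of $\widetilde\calP$ will imply integrality of every vertex of $\calP$.

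To prove $\widetilde\calP$ is integral, I will substitute $w_J = p_J$ and $z_J = 1 - q_J$, both in $[0, 1]$. The partition constraints become $\sum_{J \in \calJ'}z_J \geq |\calJ'|-1$, and the three-level laminar constraints become two-sided integer bounds on $\sum_{J \in \calJ''}(w_J + z_J)$. All non-box constraints now carry $0/1$ coefficients and organize into two laminar families of variable-subsets on the ground set $\{w_J, z_J\}_{J \in \calJ^\sfC}$: a partition family with member $\{z_J : J \in \calJ'\}$ for each $\calJ' \in \bbJ^\sfC$, and a laminar family with member $\{w_J, z_J : J \in \calJ''\}$ for each $\calJ''$ in the three-level laminar family. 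By Ghouila--Houri's criterion---which for $0/1$ matrices is witnessed by the 2-coloring of rows according to which of the two families the row belongs to---the resulting constraint matrix is totally unimodular; since all right-hand sides are integral, every vertex of $\widetilde\calP$ is integral.

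The main obstacle will be verifying the total-unimodularity claim cleanly. The first two reductions are essentially bookkeeping: the perturbation argument is a standard simplex-of-probabilities trick, and the projection onto $(p_J, q_J)$ preserves the vertex structure. But demonstrating that the combined system of one partition family plus a three-level laminar family yields a TU matrix requires the careful 2-coloring (or, equivalently, a network-matrix embedding with an arborescence encoding the laminar nesting), and this is where the bulk of the technical care will lie.
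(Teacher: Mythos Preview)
Your overall architecture---reduce each $\psi^J$ to a single scalar $p_J$ by the perturbation trick, complement $q_J$ to $z_J=1-q_J$, and recognize two laminar families of $0/1$ constraints---is exactly the structure the paper exploits. The paper skips your first reduction and works directly with all $\psi^J_{S,\beta}$ variables: after the substitution $q'_J=1-q_J$ it rewrites Constraint~\eqref{LPC:each-component} as $\sum_{S,\beta}\psi^J_{S,\beta}(|S|-\floor{s_J})+q'_J\approx y_{U(J)}+1-\floor{s_J}$, so the coefficient of $\psi^J_{S,\beta}$ is $|S|-\floor{s_J}\in\{0,1\}$ by Property~\ref{property:concentrated-S}. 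Then Constraints~\eqref{LPC:0-or-1}, \eqref{LPC:one-config-1}, \eqref{LPC:q-partition} form one laminar family and \eqref{LPC:each-component}--\eqref{LPC:all-components} form another, and the paper finishes by invoking the integrality of the intersection of two laminar matroid polytopes (``classic matroid theory'').

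There is, however, a genuine gap in your final step. The $2$-coloring you propose for Ghouila--Houri---assign each row the color of the family it belongs to---does \emph{not} work. Take any $J\in\calJ^\sfC$ and let $\calJ'\in\bbJ^\sfC$ be the part containing it. The three Family~2 rows indexed by $\{J\}$, $\calJ'$, and $\calJ^\sfC$ all receive the same color and all have a $1$ in column $w_J$; the signed column sum is $\pm 3$, violating the criterion. More generally, a fixed coloring by family membership cannot succeed because Family~2 by itself already requires an \emph{alternating} (depth-parity) coloring along each chain, and that coloring must further be chosen relative to the particular row subset under consideration.

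The integrality of $\widetilde\calP$ is still true, but the correct justification is the one the paper uses: each of your two families defines (after complementation) a generalized laminar matroid polytope, and the intersection of two such polytopes is integral by Edmonds' matroid intersection theorem. If you insist on a TU route, you would need either a subset-dependent equitable $2$-coloring or a network-matrix realization of the combined system; the naive family-based coloring is not it. Your preliminary reduction to two variables per component is correct but not needed for the matroid-intersection argument.
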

	
\ifdefined\proceeding
\else
	\begin{proof}
		To avoid negative coefficients, we shall let $q'_J = 1 - q_J$ and focus on $\psi$ and $q'$ variables.  Consider the set of tight constraints that define a vertex point.  The tight constraints from \eqref{LPC:0-or-1}, \eqref{LPC:one-config-1} and \eqref{LPC:q-partition} define a matroid base polytope for a laminar matroid.
		
		For each $J \in \calJ$, every pair $(S, \beta)$ in the support of $\phi^J$ has $|S| \approx s_J$.  Thus, Constraint~\eqref{LPC:each-component} is equivalent to $\sum_{S, \beta}\psi^J_{S, \beta}(|S| - \floor{s_J}) + q'_J \approx y_{U(J)}  + 1 - \floor{s_J}$. This is true since Constraint~\eqref{LPC:one-config-1} holds and $\psi^J$ is a distribution.  We do the same transformation for Constraints~\eqref{LPC:partition-components} and \eqref{LPC:all-components}. 
		It is easy to see that the tight constraints from \eqref{LPC:each-component}, \eqref{LPC:partition-components} and \eqref{LPC:all-components} also define the matroid-base-polytope for a laminar matroid.
		
		Thus, by the classic matroid theory, the set of tight constraints define an integral solution; thus $\calP$ is integral. 
	\end{proof}
\fi
	
	We set $\psi^{*J}_{S, \beta} = \phi^J_{S, \beta}$ and $q^*_J = s_J - y_{U(J)}$ for every $J \in \calJ^\sfC$ and $(S, \beta)$. Then, 
	\begin{lemma}
		$(\psi^*, q^*)$ is a point in polytope $\calP$. 
	\end{lemma}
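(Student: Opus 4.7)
The plan is to verify that $(\psi^*, q^*)$ satisfies each of Constraints~\eqref{LPC:0-or-1}--\eqref{LPC:all-components}. Most are immediate from the construction. The key identity is that, by the definition $\sum_{S,\beta}\psi^{*J}_{S,\beta}|S| = \E_{(S,\beta)\sim\phi^J}|S| = s_J$ and $q^*_J = s_J - y_{U(J)}$, we have
\[
\sum_{S,\beta}\psi^{*J}_{S,\beta}|S| \;-\; q^*_J \;=\; y_{U(J)} \qquad \text{for every } J \in \calJ^\sfC.
\]
Summing this over any sub-family of $\calJ^\sfC$ immediately discharges Constraints~\eqref{LPC:each-component}, \eqref{LPC:partition-components} and \eqref{LPC:all-components}, since each right-hand side lies trivially in $[\lfloor\cdot\rfloor,\lceil\cdot\rceil]$ of itself. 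Constraint~\eqref{LPC:one-config-1} holds because $\phi^J$ is a distribution.

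For Constraint~\eqref{LPC:0-or-1}, the bounds on $\psi^{*J}_{S,\beta}$ are immediate, and $q^*_J \geq 0$ follows from the lower bound in Property~\ref{property:concentrated-S-expectation}. The upper bound $q^*_J\leq 1$ is the first place the concentratedness of $J$ is used: either $y_{U(J)}>2\ell$ and then $q^*_J=0$ by the second clause of Property~\ref{property:concentrated-S-expectation}, or $y_{U(J)}\leq 2\ell$ (which covers every non-root component by Property~\ref{property:black-weights-of-components} and the remaining root components by Property~\ref{property:black-root-is-big}), in which case
\[
q^*_J \;\leq\; \frac{2\ell\, y_{U(J)}\, \pi_J}{x_{U(J),C}} \;\leq\; \frac{2\ell\, y_{U(J)}}{\ell_2} \;\leq\; \frac{4\ell^2}{\ell_2},
\]
using $\pi_J \leq x_{U(J),C}/\ell_2$ for $J\in\calJ^\sfC$; since $\ell_2 = \Theta(\ell^3)$, this is far below $1$.

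The main step is Constraint~\eqref{LPC:q-partition}. Fix $\calJ' \in \bbJ^\sfC$. By construction of $\bbJ^\sfC$, either (i) $\calJ' = G \cap \calJ^\sfC$ for some root group $G$, or (ii) $\calJ' = \bigcup_{G'\in\Lambda_G}(G'\cap\calJ^\sfC)$ for some non-leaf group $G$. In case~(i), $G$ contains a single (root) black component by Property~\ref{property:groups-root-sington}, so $\calJ'$ is a singleton (or empty), and the bound already established gives $\sum_{J\in\calJ'}q^*_J\leq 1$. In case~(ii), each child-group $G'\in\Lambda_G$ is a non-root group, so Property~\ref{property:groups-non-root-small} gives $\sum_{J\in G'}y_{U(J)} < 2\ell$, while Property~\ref{property:groups-few-children} gives $|\Lambda_G|=O(\ell)$; combining these,
\[
\sum_{J \in \calJ'} q^*_J \;\leq\; \frac{2\ell}{\ell_2}\sum_{J\in\calJ'} y_{U(J)} \;\leq\; \frac{2\ell}{\ell_2}\cdot O(\ell^2) \;=\; O\!\left(\ell^3/\ell_2\right),
\]
which is at most $1$ once the hidden constant in $\ell_2 = \Theta(\ell^3)$ is taken sufficiently large. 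The only delicate point in the whole argument is choosing this constant large enough to absorb the $O(\ell^3)$ factor; this is a free parameter and poses no obstacle.
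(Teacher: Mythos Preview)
Your proof is correct and follows essentially the same approach as the paper's own proof: both verify the key identity $\sum_{S,\beta}\psi^{*J}_{S,\beta}|S|-q^*_J=y_{U(J)}$ to handle Constraints~\eqref{LPC:each-component}--\eqref{LPC:all-components}, and both bound $\sum_{J\in\calJ'}q^*_J$ by combining Property~\ref{property:concentrated-S-expectation} with concentratedness ($\pi_J\le x_{U(J),C}/\ell_2$) and the $O(\ell^2)$ bound on $\sum_{J\in\calJ'}y_{U(J)}$ coming from Properties~\ref{property:groups-non-root-small} and~\ref{property:groups-few-children}. Your treatment is in fact a bit more careful than the paper's, making the root/non-root case split explicit and separately verifying $q^*_J\in[0,1]$ rather than leaving it implicit in the partition bound.
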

\ifdefined\proceeding
\else
	\begin{proof}
		Notice that $s_J$ is the expected size of $S$ according to the distribution $\phi^J$, while $y_{U(J)}$ is the budget for the number of open facilities open in $U(J)$.  So $q^*_J$ is the expected number of facilities that go beyond the budget. It is easy to see that Constraints~\eqref{LPC:0-or-1} and \eqref{LPC:one-config-1} hold for $\psi^*$ and $q^*$. For every $\calJ' \in \bbJ^\sfC$, we have that $\sum_{J \in \calJ'}q^*_J \leq 2\ell\sum_{J \in \calJ'}y_{U(J)}\pi_J/x_{U(J), C} \leq 2\ell\sum_{J \in \calJ'}y_{U(J)}/\ell_2 \leq 2\ell \times O(\ell^2)/\ell_2$ due to Properties~\ref{property:groups-few-children} and \ref{property:concentrated-S-expectation}. This at most 1 if $\ell_2 = \Theta(\ell^3)$ is large enough. (If $\calJ'$ contains a root component $J$ which has $y_{U(J)} > 2\ell$ then $\sum_{J \in \calJ'} q^*_J = 0$.) Thus, Constraint~\eqref{LPC:q-partition} holds. $\sum_{S, \beta}\psi^{*J}_{S, \beta}|S| - q^*_J = s_J - q^*_J = y_{U(J)}$. So, Constraints~\eqref{LPC:each-component}, \eqref{LPC:partition-components} and \eqref{LPC:all-components} hold. So, $(\psi^*, p^*)$ is a point in $\calP$. 
	\end{proof}
\fi
	We randomly select a vertex point $(\psi, q)$ of $\calP$ such that $\E[\psi^J_{S, \beta}] = \psi^{*J}_{S, \beta} = \phi^J_{S, \beta}$ for every $J \in \calJ^\sfC, (S, \beta)$,  and $\E[q_J] = q^*_J = s_J - y_{U(J)}$ for every $J \in \calJ^\sfC$. Since $\psi$ is integral, for every $J \in \calJ$, there is a unique local solution $\big(S \subseteq U(J), \beta \in \R_{\geq 0}^{U(J)}\big)$ such that $\psi^J_{S, \beta} = 1$; we add $S$ to $S^*$ and let $\beta^*_i = \beta_i$ for every $i \in U(J)$. 
	
	This finishes the definition of the initial $S^*$ and $\beta^*$.  Let $\alpha^* = \alpha$ (recall that $\alpha_v = x_{U_v, C}$ is the demand at $v$ after the initial moving, for every $v \in R$) be the initial demand vector.  Later we shall remove facilities from $S^*$ and update $\alpha^*$ and $\beta^*$.  $S^*, \alpha^*, \beta^*$ satisfy the following properties, which will be maintained as the rounding algorithm proceeds.
	\refstepcounter{theorem} \label{properties:S-beta} 	
	\begin{properties}{properties:S-beta}
		\item $\sum_{v \in V}\alpha^*_v = \sum_{v \in V}\beta^*_v$ for every $V \in \calJ^\sfC \cup \calV^\sfN$; \label{property:supply-equal-demand}
		\item $\sum_{v \in R}\alpha^*_v = |C|$. \label{property:maintain-total-demand}
	\end{properties}
	Property~\ref{property:supply-equal-demand} is due to Properties~\ref{property:concentrated-enough-demand} and \ref{property:non-concentrated-enough-demand}. Property~\ref{property:maintain-total-demand} holds since $\sum_{v \in R}\alpha^*_v = \sum_{v \in R}x_{U_v, C} = x_{F, C} = |C|$. 	
	
%
	\subsection{The $\remove$ procedure} 
	\label{subsec:remove}
	
	In this section, we define the procedure $\remove$ that removes facilities from $S^*$ and updates $\alpha^*$ and $\beta^*$. The procedure takes a set $V \in \calJ^\sfC \cup \calV^\sfN$ as input.  If $V$ is a root black component, then we let $G = \{V\}$ be the root group containing $V$; if $V$ is a non-root concentrated component, let $G$ be the parent group of the group containing $V$; otherwise $V$ is the union of non-concentrated components in all child-groups of some group, and we let $G$ be this group.  Let $V' = \union_{J' \in G}J'$.   Before calling $\remove(V)$, we require the following properties to hold:
	\refstepcounter{theorem}\label{properties:before-remove}
	\begin{properties}{properties:before-remove}
		\item $|S^* \cap U(V)| \geq 1$; \label{property:remove-exists-one-facility}
		\item $\big|S^* \cap U(V')\big| \geq \ell - 6$.\label{property:remove-exists-ell-facilities}
	\end{properties}
	While maintaining Properties~\ref{property:supply-equal-demand} and \ref{property:maintain-total-demand}, the procedure $\remove(V)$ will
	\refstepcounter{theorem}\label{properties:after-remove}
	\begin{properties}{properties:after-remove}
		\item remove from $S^*$ exactly one open facility, which is in $U(V \cup V')$,
		\label{property:remove-one-facility}
		\item not change $\alpha^*|_{R \setminus (V \cup V')}$ and $\beta^*|_{F \setminus U(V \cup V')}$,
		 \label{property:remove-outside-nochange}
		\item increase $\alpha^*_v$ by at most a factor of $1 + O(1/\ell)$ for every $v \in V \cup V'$ and increase $\beta^*_i$ by at most a factor of $1 + O(1/\ell)$ for every $i \in U(V \cup V')$.
		\label{property:remove-inside-change-small}
	\end{properties}
	Moreover, 
	\begin{properties}[4]{properties:after-remove}
		\item the moving cost for converting the old $\alpha^*$ to the new $\alpha^*$ is at most $O(\ell^2)\beta^*_{i^*} L(J)$ for some black component $J \subseteq V$ and facility $i^* \in U(J)$; \label{property:remove-transfering-small}
		\item for every $V'' \in \calJ^\sfC \cup \calV^\sfN$, 
		$\EMD_{V''}\big(\alpha^*|_{V''}, \beta^*|_{U(V'')}\big)$ will be increased by at most a factor of $1+O(1/\ell)$.
		\label{property:remove-EMD-change-small}
	\end{properties}
	
\ifdefined\proceeding
	Due to the page limit, we only highlight the key ideas used to implement $\remove(V)$ and leave the formal description to the full version of the paper. 
\else
	Before formally describe the procedure $\remove(V)$, we first highlight some key ideas. 
\fi
	Assume $V$ is not a root component. We choose an arbitrary facility $i \in S^* \cap U(V)$. Notice that there are $\Omega(\ell)$ facilities in $S^* \cap U(V')$. If the $\beta^*_i \leq \sum_{v' \in V'}\alpha^*_{v'}/\ell$, then we can shut down $i$ and send the demands that should be sent to $i$ to $V'$. We only need to increase the supplies in $U(V')$ by a factor of $1+O(1/\ell)$.  Otherwise, we shall shut down the facility $i' \in S^* \cap U(V')$ with the smallest $\beta^*_{i'}$ value.  Since there are at least $\Omega(\ell)$ facilities in $U(V')$, we can satisfy the $\beta^*_{i'}$ units of unsatisfied demands using other facilities in $S^* \cap U(V')$.  For this $i'$, we have $\beta^*_{i'} \leq O(1) \beta^*_i$. Thus, the total amount of demands that will be moved is comparable to $\beta^*_i$. In either case, the cost of redistributing the demands is not too big. When $V$ is a root component, we shall shut down the facility $i' \in S^* \cap U(V)$ with the smallest $\beta^*_{i'}$ value. 

\ifdefined\proceeding
\else
	We now formally describe the procedure $\remove(V)$.  We first consider the case that $V$ is not a root component. So, $V$ is either a non-root component in $\calJ^\sfC$, or the union of all non-concentrated components in all child-groups of the group $G$.  In this case, $V \cap V' = \emptyset$. Let $i \in U(V) \cap S^*$ be an arbitrary facility; due to Property~\ref{property:remove-exists-one-facility}, we can find such an $i$. Let $J \subseteq V$ be the component that contains $i$.
	
	If $a:= \beta^*_i/\sum_{v' \in V'}\alpha^*_{v'} \leq \frac{1}{\ell}$, then we shall shutdown $i$. Consider the matching $f: V \times U(V) \to \R_{\geq 0}$ between $\alpha^*|_V$ and $\beta^*|_{U(V)}$ that achieves $\EMD_V\big(\alpha^*|_V, \beta^*|_{U(V)}\big)$. (Due to Property~\ref{property:supply-equal-demand}, the total supply equals the total demand.)  For every $v \in V$, we shall move $f(v, i)$ units of demand from $v$ to $V'$.  The total amount of demand moved from $V$ to $V'$ is exactly $\sum_{v \in V}f_{v, i} = \beta^*_i$.  Every $v' \in V'$ will receive $a\alpha^*_{v'}$ units of demand.    We update $\alpha^*$ to be the new demand vector: decrease $\alpha^*_v$ by $f(v, i)$ for every $v \in V$ and scale $\alpha^*_{v'}$ by a factor of $(1+a)$ for every $v' \in V'$.   By Property~\ref{property:groups-close}, the cost of moving the demands is at most $\beta^*_i O(\ell^2) L(J)$; thus, Property~\ref{property:remove-transfering-small} holds.
	
	We remove $i$ from $S^*$,  change $\beta^*_i$ to $0$, and for every $i' \in U(V')$, scale $\beta^*_{i'}$ by a factor of $(1+a)$. For this new $\alpha^*$ and $\beta^*$ vector, $\EMD_V(\alpha^*, \beta^*)$ will not increase. $\alpha^*|_{V''}$ and $\beta^*|_{U(V'')}$ are scaled by a factor of $1+a \leq 1+1/\ell$ for every $V'' \in \calJ^\sfC \cup \calV^\sfN$ such that $V'' \subseteq V'$.  Thus Properties~\ref{property:remove-inside-change-small} and \ref{property:remove-EMD-change-small} are satisfied.  Properties~\ref{property:remove-one-facility} and \ref{property:remove-outside-nochange} are trivially true. Moreover, we maintained Properties~\ref{property:supply-equal-demand} and \ref{property:maintain-total-demand}. 
	
	Now consider the case $a > 1/\ell$. In this case, we shall remove the facility $i' \in S^* \cap U(V')$ with the smallest $\beta^*_{i'}$ value from $S^*$. Notice that we have $|S^* \cap U(V')| \geq \ell-6$ before we run $\remove(V)$ due to Property~\ref{property:remove-exists-ell-facilities}. Let $a' := \beta_{i'}/\sum_{i'' \in U(V')}\beta^*_{i''}$; so, we have $a' \leq 1/(\ell-6)$. To remove the facility $i'$, we consider the function $f$ that achieves $\EMD_{V'}(\alpha^*|_{V'}, \beta^*|_{U(V')})$. We shall redistribute the demands in $V'$ so that the new demand at $v' \in V'$ will be  $(1+a')\sum_{i'' \in U(V') \setminus \{i'\}}f(v', i'')$. We remove $i'$ from $S^*$, change $\beta^*_{i'}$ to $0$ and scale up $\beta^*_{i''}$ for all other $i'' \in U(V') \setminus \{i'\}$ by $(1+a')$.  Then, the total cost for redistributing the demands in this procedure will be at most $\beta^*_{i'}O(\ell)L(J)$, due to Property~\ref{property:black-lengths-decrease}. This is at most $O(\ell)\beta^*_i L(J)$ since $a > 1/\ell$ and $a' \leq 1/(\ell-6)$. So, Properties~\ref{property:remove-one-facility} to \ref{property:remove-EMD-change-small} are satisfied and Properties~\ref{property:supply-equal-demand} and \ref{property:maintain-total-demand} are maintained.
	
	The case where $V$ is a root component can be handled in a similar way. In this case, we have $G = \{V\}$ and $V' = V$. By Property~\ref{property:remove-exists-ell-facilities}, there are at least $\ell-6$ facilities in $S^* \cap U(V)$. Then we can remove the facility $i \in U(V) \cap S^*$ with the smallest $\beta^*_i$. Using the same argument as above, we can guarantee Properties~\ref{property:remove-one-facility} to \ref{property:remove-EMD-change-small} and maintain Properties~\ref{property:supply-equal-demand} and \ref{property:maintain-total-demand}.
\fi
		
	\subsection{Obtaining the Final Solution}  
	\label{subsec:final-solution}
	To obtain our final set $S^*$ of facilities, we call the $\remove$ procedures in some order. We consider each group $G$ using the top-to-bottom order. That is, before we consider a group $G$, we have already considered its parent group.  If $G$ is a root group, then it contains a single root component $J$. If $J \in \calJ^\sfN$, repeat the the following procedure twice: if there is some facility in $S^* \cap U(J)$ then we call $\remove(J)$.  If $J \in \calJ^\sfC$ and $q_J = 1$ then we call $\remove(J)$.   Now if $G$ is a non-leaf group, then do the following. Let $V = \union_{G' \in \Lambda_G, J \in G'\cap \calJ^\sfN}J$. Repeat the following procedure twice: if there is some facility in $S^* \cap U(V)$ then we call $\remove(V)$.  For every $G' \in \Lambda_G$ and $J \in G' \cap \calJ^\sfC$ such that $q_J = 1$ we call $\remove(J)$. 
	\begin{lemma}
		After the above procedure, we have $|S^*| \leq y_F \leq k$.
	\end{lemma}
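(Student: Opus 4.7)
I would decompose $|S^*|_{\text{final}}$ as the initial opening minus the total number of facilities removed by the $\remove$ calls. Writing $y_F^\sfC := \sum_{J \in \calJ^\sfC} y_{U(J)}$ and $y_F^\sfN := \sum_{J \in \calJ^\sfN} y_{U(J)}$ so that $y_F = y_F^\sfC + y_F^\sfN$, and using Property~\ref{property:remove-one-facility} (each successful call removes exactly one facility), provided every call issued by the algorithm succeeds one has
$$|S^*|_{\text{final}} = \sum_{J \in \calJ^\sfC}\bigl(|S^J| - q_J\bigr) + \sum_{V \in \calV^\sfN}\bigl(|S^V| - 2\bigr),$$
where $(S^J, \beta^J)$ is the unique pair with $\psi^J_{S^J, \beta^J} = 1$ and $(S^V, \beta^V)$ is the local solution produced by Lemma~\ref{lemma:non-concentrated-sets}.

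The two sums I would bound separately. For the concentrated sum, the integrality of $(\psi, q)$ and Constraint~\eqref{LPC:all-components} give $\sum_{J}(|S^J| - q_J) \approx y_F^\sfC$; as the left side is an integer this yields $\sum_{J}(|S^J| - q_J) \leq \ceil{y_F^\sfC}$. For the non-concentrated sum, Property~\ref{property:non-concentrated-S} gives $|S^V| \leq \ceil{y_{U(V)}} + 1$, so each term is at most $\ceil{y_{U(V)}} - 1 \leq y_{U(V)}$; since the overall sum is an integer, this upgrades to $\sum_V (|S^V| - 2) \leq \floor{y_F^\sfN}$. Adding the two bounds and invoking the elementary identity $\ceil{a} + \floor{b} \leq \ceil{a + b}$ (reduce to fractional parts and verify $\ceil{\alpha} \leq \ceil{\alpha + \beta}$ for $\alpha, \beta \in [0,1)$) gives $|S^*|_{\text{final}} \leq \ceil{y_F} \leq k$, the last step by Constraint~\eqref{LPC:k-facilities} and integrality of $k$.

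The hard part will be verifying that each $\remove$ call actually succeeds, i.e., that the preconditions Properties~\ref{property:remove-exists-one-facility} and~\ref{property:remove-exists-ell-facilities} hold at the moment of every invocation. For $\remove(V)$ with $V \in \calV^\sfN$, Property~\ref{property:remove-exists-one-facility} is checked explicitly by the algorithm; for $\remove(J)$ with $q_J = 1$, Constraint~\eqref{LPC:each-component} forces $|S^J| = (|S^J| - q_J) + 1 \geq 1$, and the top-down processing order ensures $U(J)$ has not been touched earlier. Property~\ref{property:remove-exists-ell-facilities} is more delicate: for the relevant parent-group union $V' = U(G)$ with $G$ non-leaf, Property~\ref{property:groups-non-leaf-big} gives $\sum_{J \in G} y_{U(J)} \geq \ell$, which via Theorem~\ref{thm:concentrated-sets}\ref{property:concentrated-S-expectation} and Lemma~\ref{lemma:non-concentrated-sets}\ref{property:non-concentrated-S} initially places $\Omega(\ell)$ open facilities in $U(G)$; the only removals that can drain $U(G)$ are those associated with $G$'s own processing and with $G$'s parent's processing, and by Constraint~\eqref{LPC:q-partition} (at most one concentrated remove per $\calJ' \in \bbJ^\sfC$) together with the twice-per-$\calV^\sfN$-element rule this totals at most a constant number of extra removals, leaving at least $\ell - 6$ facilities intact.
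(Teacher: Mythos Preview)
Your counting formula for the non-concentrated part has a gap. You write
\[
|S^*|_{\text{final}} \;=\; \sum_{J \in \calJ^\sfC}\bigl(|S^J| - q_J\bigr) \;+\; \sum_{V \in \calV^\sfN}\bigl(|S^V| - 2\bigr),
\]
which presumes that $\remove(V)$ is invoked exactly twice for each $V \in \calV^\sfN$. But the algorithm only issues the call when $S^* \cap U(V) \neq \emptyset$; you note that this check is ``explicit'' yet never argue that it passes both times. It need not: if $|S^V| = 1$ (which can occur whenever $y_{U(V)} \leq 1$, e.g.\ when $V$ consists of a single non-concentrated component with small $y$-mass) and the first $\remove(V)$ removes that unique facility from $U(V)$ rather than from $U(V')$, the second call is skipped. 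Then only one facility is removed on behalf of $V$, and your displayed formula \emph{undercounts} $|S^*|_{\text{final}}$, so the upper bounds you derive on its right-hand side do not control the quantity you actually want.

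The paper handles this by bounding the number of successful $\remove(V)$ calls from below: whatever happens, that number is at least $|S^V| - y_{U(V)}$ (check the two cases $|S^V| \geq 2$ and $|S^V| = 1$ separately), so the contribution of $V$ to the final count is at most $y_{U(V)}$. Combined with your correct bound $\sum_{J}(|S^J|-q_J) < y_F^{\sfC} + 1$ from Constraint~\eqref{LPC:all-components}, this yields $|S^*|_{\text{final}} < y_F + 1$, hence $\leq k$ by integrality. Your verification of the preconditions~\ref{property:remove-exists-one-facility} and~\ref{property:remove-exists-ell-facilities} for the calls that \emph{are} issued is along the right lines and essentially matches the paper.
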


\ifdefined\proceeding
\else
	\begin{proof}
		We first show that whenever we call $\remove(V)$, Properties~\ref{property:remove-exists-one-facility} and \ref{property:remove-exists-ell-facilities} hold.  For any concentrated component $J$ with $q_J = 1$, we have called $\remove(J)$. Notice that if $q_J = 1$, then initially we have $|S^* \cap U(J)| \geq 1$ due to Constraint~\eqref{LPC:each-component}. Due to the top-down order of considering components, and Property~\ref{property:remove-one-facility},  we have never removed a facility in $S^* \cap U(J)$ before calling $\remove(J)$.  Thus, Property~\ref{property:remove-exists-one-facility} holds.  For $V\in \calV^\sfN$, we check if $|S^* \cap U(V)| \geq 1$ before we call $\remove(V)$ and thus Property~\ref{property:remove-exists-one-facility} holds. 
		
		Now consider Property~\ref{property:remove-exists-ell-facilities}. For any non-leaf group $G$, initially, we have $\big|S^* \cap \union_{J \in G}U(J)\big| \geq \floor{\sum_{J \in G}y_{U(J)}} \geq \ell$ where the first inequality is due to Property~\ref{property:non-concentrated-S} and Constraint~\eqref{LPC:partition-components} and the second is due to Property~\ref{property:groups-non-leaf-big}.  We may remove a facility from the set when we call $\remove(V)$ for $V$ satisfying one of the following conditions: (a) $V$ is a concentrated component in $G$ or in a child group of $G$, (b) $V$ is the union of the non-concentrated components in the child-groups of $G$ or (c) $V$ contains the non-concentrated components in $G$. For case (a), we removed at most $2$ facilities due to Constraint~\eqref{LPC:q-partition}. For each (b) and (c), we remove at most $2$ facilities.  Thus, we shall remove at most $6$ facilities from $\big|S^* \cap \union_{J \in G}U(J)\big| \geq \sum_{J \in G}y_{U(J)}$. Thus, Property~\ref{property:remove-exists-ell-facilities} holds.
		
		Thus, every call of $\remove$ is successful.  For a concentrated component $J$ with $q_J = 1$, we called $\remove(J)$ once.  For each $V \in \calV^\sfN$, initially we have $|S^* \cap U(V)| \in \big\{\ceil{y_{U(V)}}, \ceil{y_{U(V)}} + 1\big\}$. Before calling $\remove(V)$, we have never removed a facility from $S^* \cap U(V)$. Thus, the number of times we call $\remove(V)$ is at least the initial value of $|S^* \cap U(V)|$ minus $y_{U(V)}$.
		Overall, the number of facilities in $S^*$ after the removing procedure is at most 
		$\sum_{J \in \calJ^\sfC}\Big(\sum_{S, \beta}\psi^J_{S, \beta}-q_J\Big) + \sum_{V \in \calV^\sfN} y_{U(V)} 
			< \sum_{J \in \calJ^\sfC}y_{U(J)} + 1 + \sum_{V \in \calV^\sfN} y_{U(V)}  = y_F + 1 \leq k + 1,$
		where the first inequality is due to Constraint~\eqref{LPC:all-components}. Since $|S^*|$ is an integer, we have that $|S^*| \leq k$. 
	\end{proof}
\fi

	By Properties~\ref{property:remove-outside-nochange} and \ref{property:remove-inside-change-small}, and Constraint~\eqref{LPC:q-partition}, our final $\beta^*_i$ is at most $1+O(1/\ell)$ times the initial $\beta^*_i$ for every $i \in V$.  Finally we have $\beta^*_i \leq (1+O(1/\ell))u_i$ for every $i \in F$.  Thus, the capacity constraint is violated by a factor of $1+\epsilon$ if we set $\ell$ to be large enough. 
	
	It remains to bound the expected cost of the solution $S^*$; this is done by bounding the cost for transferring the original $\alpha^*$ to the final $\alpha^*$, as well as the cost for matching our final $\alpha^*$ and $\beta^*$.  
	
	We first focus on the transferring cost.  By Property~\ref{property:remove-EMD-change-small}, when we call $\remove(V)$, the transferring cost is at most $O(\ell^2)\beta^*_{i^*} L(J)$ for some black component $J \subseteq V$ and $i^*$. Notice that $\beta^*_{i^*}$ is scaled by at most a factor of $(1+O(1/\ell))$, we always have $\beta^*_{i^*} \leq (1+O(1/\ell))\alpha_{U(J), C}$. So, the cost is at most $O(\ell^2)x_{U(J), C}L(J)$.  If $V$ is the union of some non-concentrated components, then this quantity is at most $O(\ell^2)\ell_2 \pi_J L(J) \leq O(\ell^2 \ell_2) D_{U(J)} \leq O(\ell^2 \ell_2) D_{U(V)}$.  We call $\remove(V)$ at most twice, thus the contribution of $V$ to the transferring cost is at most $O(\ell^2 \ell_2)D_{U(V)}$. If $V$ is a concentrated component $J$, then the quantity might be large. However, the probability we call $\remove(J)$ is $\E[q_J] = q^*_J = s_J - y_{U(J)} \leq 2\ell y_{U(J)}\pi_J/x_{U(J),C}$ if $y_{U(J)} \leq 2\ell$ and it is 0 otherwise (by Property~\ref{property:concentrated-S-expectation}). So, the expected contribution of this $V$ to the transferring cost is at most $O(\ell^2)x_{U(J),C}L(J) \times 2\ell y_{U(J)}\pi_J/x_{U(J), C} \leq O(\ell^4)\pi_J L(J) \leq O(\ell^4)D_{U(J)}$ by Lemma~\ref{lemma:x-j-times-one-minus-x-j-small}. Thus, overall, the expected transferring cost is at most $O(\ell^5)D_F  = O(\ell^5)\LP$.
	
	Then we consider the matching cost. Since we maintained Property~\ref{property:supply-equal-demand}, the matching cost is bounded by $\sum_{V \in \calJ^\sfC \cup \calV^\sfN}\EMD_V(\alpha^*|_V, \beta^*|_{U(V)})$.  Due to Property~\ref{property:remove-EMD-change-small}, this quantity has only increased by a factor of $1+O(1/\ell)$ during the course of removing facilities. For the initial $\alpha^*$ and $\beta^*$, the expectation of this quantity is at most $\sum_{J \in \calJ^\sfC}O(\ell^4)D_{U(J)} + \sum_{V \in \calV^\sfN} O(\ell^2\ell_2)D_{U(V)}$ due to Properties~\ref{property:concentrated-cost} and \ref{property:non-concentrated-cost}.  This is at most $O(\ell^5)D_F = O(\ell^5)\LP$. 
	
	We have found a set $S^*$ of at most $k$ facilities and a vector $\beta^* \in \R_{\geq 0}^F$ such that $\beta^*_i = 0$ for every $i \notin S^*$ and $\beta^*_i \leq (1+O(1/\ell))u_i$. If we set $\ell = \Theta(1/\eps)$ to be large enough, then $\beta^*_i \leq (1+\eps)u_i$.  The cost for matching the $\alpha$-demand vector and the $\beta^*$ vector is at most $O(\ell^5)\LP = O(1/\eps^5)\LP$. Thus, we obtained a $O(1/\eps^5)$-approximation for \CKM with $(1+\eps)$-capacity violation.

	\bibliographystyle{plain}
	\bibliography{reflist}
	
\end{document}